\documentclass[11pt,reqno]{article}
\usepackage[numbers]{natbib}
\usepackage{amsfonts, amsmath, wasysym}
\usepackage{amssymb, amsthm}
\usepackage{mathrsfs}
\usepackage{verbatim}
\usepackage[usenames,dvipsnames]{color}
\newfont{\ffont}{cmr10}
\usepackage{amsmath, amssymb, amsfonts, amsbsy, amsthm,
latexsym,graphicx}
\usepackage{cancel}
\usepackage{pifont}
\usepackage{enumerate}
\topmargin 0pt \advance \topmargin by -\headheight \advance
\topmargin by -\headsep \textheight 8.9in \oddsidemargin 0pt
\evensidemargin \oddsidemargin \marginparwidth 0.5in

\textwidth 6.5in

\allowdisplaybreaks

\newcommand{\cH}{\mathcal{H}}
\newcommand{\cA}{\mathcal{A}}

\newcommand{\cX}{\mathcal{X}}

\newcommand{\cN}{\mathcal{N}}

\newcommand{\ot}{\otimes}

\newcommand{\Tr}{\mathrm{Tr}}
\newcommand{\al}{\alpha}

\newcommand{\ga}{\gamma}
\newcommand{\del}{\delta}
\newcommand{\ka}{\kappa}

\newcommand{\Del}{\Delta}

\newcommand{\R}{\mathbb{R}}

\newcommand{\N}{\mathbb{N}}

\newcommand{\E}{\mathbb{E}}

\newcommand{\bP}{\mathbb{P}}

\newcommand{\la}{\lambda}
\newcommand{\eps}{\varepsilon}
\newcommand{\ti}{\times}

\newcommand{\rank}{\mathrm{Rank}}

\newcommand{\Prob}{\mathbb{P}}
\newcommand{\G}{{\mathbb{G}}}

\newcommand{\vare}{{\varepsilon}}

\newcommand{\pack}{\mathrm{pack}}
\newcommand{\card}{\mathrm{card}}

\newcommand{\supp}{\mathrm{supp}}
\newcommand{\dime}{\mathrm{dim}}

\newcommand{\kernel}{\mathrm{ker}}
\newcommand{\vol}{\mathrm{vol}}

\newcommand{\bE}{{\mathbf E}}
\newcommand{\be}{{\mathbf e}}
\newcommand{\bQ}{{\mathbf Q}}
\newcommand{\bx}{{\mathbf x}}
\newcommand{\bv}{{\mathbf v}}

\newcommand{\by}{{\mathbf y}}
\newcommand{\bz}{{\mathbf z}}

\newcommand{\AI}{{\bf A_I}}

\makeatletter

\newcommand{\Rmnum}[1]{\expandafter\@slowromancap\romannumeral #1@}
\makeatother

\makeatletter
\newcommand{\specificthanks}[1]{\@fnsymbol{#1}}
\makeatother

\newcommand{\xspace}{\hbox{\kern-2.5pt}}

\hyphenation{Ro-sen-thal} \hyphenation{Burk-holder} \hyphenation{Rade-macher}

\begin{document}

\newtheorem{theorem}{Theorem}[section]
\newtheorem{lemma}[theorem]{Lemma}
\newtheorem{corollary}[theorem]{Corollary}
\newtheorem{proposition}[theorem]{Proposition}

\newcommand{\HR}[1]{{\color{Magenta}{#1}}}
\newcommand{\todo}[1]{{\color{red}{#1}}}

\theoremstyle{definition}
\newtheorem{definition}[theorem]{Definition}
\newtheorem{xca}[theorem]{Exercise}
\newtheorem{design-crit}[theorem]{Design Criterion}
\newtheorem{remark}[theorem]{Remark}

\theoremstyle{remark}
\newtheorem{example}[theorem]{Example}

\title{Uniform recovery of fusion frame structured sparse signals}

\author{Ula\c{s} Ayaz\thanks{Hausdorff Center for Mathematics,
    University of Bonn, Endenicher Allee 60, 53115 Bonn, Germany.} \textsuperscript{,\specificthanks{2}} \and \hspace{0.45in} Sjoerd Dirksen \textsuperscript{\specificthanks{1}} \and
         \and Holger Rauhut\thanks{RWTH Aachen University, Templergraben 55, 52056 Aachen, Germany.}}

\maketitle

\begin{abstract}
We consider the problem of recovering fusion frame sparse signals from incomplete measurements. These signals are composed of a small number of nonzero blocks taken from a family of subspaces. First, we show that, by using a-priori knowledge of a coherence parameter associated with the angles between the subspaces, one can uniformly recover fusion frame sparse signals with a significantly reduced number of vector-valued (sub-)Gaussian measurements via mixed $\ell^1$/$\ell^2$-minimization. We prove this by establishing an appropriate version of the restricted isometry property. Our result complements previous nonuniform recovery results in this context, and provides stronger stability guarantees for noisy measurements and approximately sparse signals. Second, we determine the minimal number of scalar-valued measurements needed to uniformly recover all fusion frame sparse signals via mixed $\ell^1$/$\ell^2$-minimization. This bound is achieved by scalar-valued subgaussian measurements. In particular, our 
result shows that the number of scalar-valued subgaussian measurements cannot be further reduced using knowledge of the coherence parameter. As a special case it implies that the best known uniform recovery result for block sparse signals using subgaussian measurements is optimal.
\end{abstract}

\section{Introduction}

Compressive sensing \cite{carota06,do06-2,Eldar12,FoR13} predicts that sparse signals can be recovered from incomplete and possibly noisy measurements via efficient algorithms. The linear measurement process is typically described via random matrices. For instance, Gaussian random matrices provide optimal recovery guarantees in the sense that $m \geq Cs \log(N/s)$ measurements are necessary and sufficient to recover any $s$-sparse vector in dimension $N$ via $\ell_1$-minimization and other recovery algorithms \cite{cata06,mepato09}.

Often signals possess more structure than just plain sparsity. In the block sparsity model \cite{Eldar08,EKB10} one assumes that a signal consists of blocks, of which only a few are nonzero. This model is strongly related to (and can in fact be viewed as special case of) the joint
sparsity model \cite{elra10,fora08,grrascva08}, where one considers a signal consisting of several ``channels'' (such as the three color channels of an RGB image) and assumes that nonzeros coefficients appear at the same location within each of the channels. A generalization of the block sparsity model is the group sparsity model where the groups of nonzero coefficients are allowed to overlap \cite{BBC13,Rao12}.

In \cite{Boufounos09}, a refinement of the block sparsity model was introduced which is related to the concept of fusion frames. In the fusion frame sparsity model we assume that the signal is block sparse and, in addition, lies in
$$
\cH=\{\bx = (x_j)_{j=1}^N : x_j \in W_j, \ \forall j \in [N]\} \subset \R^{dN},
$$
i.e., every nonzero block $x_j$ of the signal is assumed to lie in a certain $k$-dimensional subspace $W_j$ of $\R^d$. The collection of these subspaces may form a fusion frame (although this is not strictly required for our theory). Fusion frames generalize frames \cite{ch03} and were first introduced in \cite{Casazza04} under the name of `frames of subspaces' (see also the survey
\cite{Casazza13}). They allow to analyze signals by projecting them onto multidimensional subspaces and for stable reconstruction from these projections.

In this paper we study uniform recovery of fusion frame sparse signals using random linear measurements. We are particularly interested in determining whether one can reduce the number of measurements needed for recovery if one knows that the subspaces are \emph{incoherent}. We measure the coherence with the parameter
$$\lambda = \max_{i \neq j} \|P_iP_j\|_{2\to 2},$$
where $P_i$ is the orthogonal projection onto the subspace $W_i$. This parameter, which was introduced in \cite{Boufounos09}, is a measure of the mutual orthogonality of the subspaces (note that $\lambda=0$ means that all subspaces are orthogonal to each other).

In the first part of the paper, we consider a measurement model in which we take scalar-valued linear measurements of the signal, i.e., we observe a vector $\by$ of $m'$ measurements of the form
\begin{equation}
\label{eqn:scalarMeasIntro}
\by = \mathbf{\Phi} \bx
\end{equation}
where $\mathbf{\Phi} \in \R^{m'\ti dN}$. This recovery problem was studied, among others, in \cite{EKB10,elmi09,Stojnic09-1}. As we recall in Section~\ref{sec:sufScalar}, it is essentially known that one can uniformly recover every $s$-sparse signal in $\cH$ in a stable and robust manner from $m'$ subgaussian measurements using mixed $\ell_1/\ell_2$-minimization, provided that $m'\gtrsim s\log(N/s) + sk$. Our first main result, stated in Theorem~\ref{uni_optimal}, shows that this is \emph{optimal}: if one can recover every $s$-sparse vector in $\cH$ from the linear measurements in (\ref{eqn:scalarMeasIntro}) via mixed $\ell_1/\ell_2$-minimization, then $m'\gtrsim s\log(N/s) + sk$ measurements are \emph{necessary}. In particular this shows that the number of scalar subgaussian measurements cannot be further reduced by using a-priori knowledge of the coherence of the subspaces $W_j$.

In the second part of the paper, we consider a more natural measurement model in which one takes vector-valued measurements. In this model we observe $m$ measurements of the form
$$
\by=(y_i)_{i=1}^m = \left(\sum_{j=1}^N a_{ij} x_j \right)_{i=1}^m,
\ \ y_i \in \R^d.
$$
In the vector-valued measurement model, one expects the coherence of the subspaces $W_j$ to play a significant role. To see this, note that if the spaces $W_j$ are mutually orthogonal, then a \emph{single} measurement is sufficient to recover the signal. Indeed, in this case $x_j = a_j^{-1}P_jy$. This suggests that fewer measurements are necessary when the subspaces are close to being orthogonal.

In \cite{Boufounos09} the first results concerning the reconstruction of fusion frame sparse signals from vector-valued measurements via mixed $\ell_1/\ell_2$-minimization were obtained. In particular, a fusion frame version of the well-known restricted isometry property (RIP) \cite{cata06}, \cite[Chapter~6]{FoR13} was introduced and it was shown that is implied by the classical RIP. As a consequence, a signal that is $s$-sparse in this fusion frame model can be recovered from $m \geq C s \log(N/s)$ Gaussian vector-valued measurements, which corresponds to taking the measurement coefficients $a_{ij}$ to be Gaussian. This bound does not take the coherence of the subspaces $W_j$ into account. A different analysis in \cite{Boufounos09} based on the block-coherence of the measurement matrix yields the desired dependence in $\la$, but this type of analysis cannot provide the optimal scaling of the measurements in terms of the sparsity. A decrease of the number of measurements when $\la$ decreases was
also observed in an average case analysis in \cite{Boufounos09}, where the non-zero coefficients are chosen at random according to a certain model, similarly as in \cite{elra10}. Recently in \cite{Ayaz13}, a nonuniform recovery result valid for a fixed fusion frame sparse signal and Bernoulli measurements was shown, which requires fewer measurements as $\la$ decreases and at the same time exhibits linear scaling in terms of the sparsity (up to logarithmic factors). The proof proceeds by an adaptation of the golfing scheme invented by D.~Gross \cite{gr09-2}, see also \cite[Chapter~12]{FoR13}.

The second main result of this paper, Theorem~\ref{main_uniform}, guarantees uniform recovery of sparse vectors via $\ell_1$/$\ell_2$-minimization using a small number of vector-valued (sub)gaussian measurements. Our recovery guarantee exhibits improved stability in the presence of noise and under approximate sparsity when compared to the non-uniform recovery result in \cite{Ayaz13}. Contrary to the earlier results in \cite{Boufounos09}, the required number of measurements simultaneously decreases with $\lambda$ and shows linear scaling in terms of the signal sparsity up to logarithmic factors. We establish this result by showing that a subgaussian matrix satisfies the fusion restricted isometry property with high probability. Our proof of the latter result relies heavily on a recent tail bound for suprema of second order chaos processes \cite{KMR13}. In the final section of the paper we give a lower bound on $\la$ and use this to compare a necessary condition on the number of measurements needed for uniform
sparse recovery via mixed $\ell_1$/$\ell_2$-minimization with the sufficient condition in Theorem~\ref{main_uniform}. These conditions do not match, however, and we leave it as an interesting open problem to close the gap between the two.

\subsection*{Acknowledgements}

All authors would like to thank the Hausdorff Center for Mathematics for support. Ula\c{s} Ayaz and Holger Rauhut acknowledge funding
by the European Research Council (ERC) through the Starting Grant 258926 (SPALORA).

\subsection{Block sparsity and fusion frames}\label{fusionintro}

We consider signals $\bx = (x_j)_{j = 1}^N \in \R^{dN}$, where the components $x_j \in \R^d$ are vectors themselves.
We say that $\bx$ is \emph{$s$-sparse} if it is $s$-sparse in the block sense \cite{Eldar08}, i.e., $\|\bx\|_0\leq s$, where
$$
\|\bx\|_0 = \card \{j \in [N] : x_j \neq 0\}.
$$
Here, and in the following $[N]:= \{1,2,\hdots,N\}$. We refine this block sparsity model in the following way.
Given a collection of $N$ subspaces $W_j \subset \R^d$ with $\text{dim}(W_j)=k$, $j\in [N]$,
we require that the components $x_j$ of a vector $\bx$ are contained in $W_j$ for all $j \in [N]$, that is,
$\bx$ is contained in the space
$$
\cH=\{\bx = (x_j)_{j=1}^N : x_j \in W_j, \ \forall j \in [N]\} \subset \R^{dN}.
$$
Accordingly, for any $s\in [N]$ we denote the $s$-sparse vectors in $\cH$ by
$$\cH_s = \{\bx \in \cH \ : \ \|\bx\|_0\leq s\}.$$
Often, vectors are only approximately sparse. In order to measure how close a given vector is to the set of sparse vectors
we introduce the error of best $s$-term approximation in $\ell_{2,1}$ of a vector $\bx$ as
\begin{equation}\label{best:sterm}
\sigma_s(\bx)_1 := \inf_{\|\bz\|_0 \leq s} \|\bx -\bz \|_{2,1},
\end{equation}
where
\begin{align}
\|\bx\|_{2,1} = \sum_{j=1}^N \|x_j\|_2. \label{l21norm}
\end{align}
We call vectors with small $\sigma_s(\bx)_1$ \emph{approximately sparse} or \emph{compressible}.

In what follows, it will be important to introduce a parameter that measures how much the subspaces $W_j$ deviate
from a collection of orthogonal subspaces. Let $P_j : \R^d \to \R^d$ denote the orthogonal projection onto the subspace $W_j \subset \R^d$.
We define the \emph{coherence} $\la$ of $(W_j)_{j=1}^N$ by
\begin{align}\label{lambda_par}
\lambda = \max_{i \neq j} \|P_iP_j\|_{2\to 2},
\end{align}
where $\|\cdot\|_{2\to 2}$ denotes the operator norm. Let $\theta_{ij}^{(1)}\leq \theta_{ij}^{(2)} \leq \ldots \leq \theta_{ij}^{(k)}$ be the principal angles between $W_i$ and $W_j$. The cosines of the principal angles coincide with the singular values of $P_iP_j$ (see e.g.\ \cite{Ste73}) so that
\begin{align}\label{lambda_alter}
\lambda = \max_{i \neq j} \cos \theta_{ij}^{(1)}.
\end{align}

In order to explain the terminology {\em fusion frame sparsity model}, we recall that the collection $(W_j)_{j=1}^N$ of subspaces is called a {\em fusion frame}
if there are constants $0<A\leq B < \infty$ (called frame bounds) and certain weights $v_j>0$, $j \in [N]$, such that
\begin{align}\label{FF_property}
A \|x\|_2^2 \leq \sum_{j=1}^N v_j^2\|P_j x\|_2^2 \leq B\|x\|_2^2 \quad \mbox{ for all } x \in \R^d.
\end{align}
The special case that all subspaces $W_j$ are one-dimensional, i.e., $W_j = \operatorname{span}\{f_j\}$ and $v_j = \|f_j\|_2$ reduces to the situation of classical frames \cite{ch03}. We refer to \cite{Casazza13} for more information on fusion frames. Although our work, as well as \cite{Ayaz13,Boufounos09}, is strongly motivated by fusion frames,
we emphasize that our results below do not assume that the subspaces $W_j$ satisfy the fusion frame property \eqref{FF_property}.

\section{Sparse recovery using scalar-valued measurements}

In this section we consider the recovery of sparse vectors from scalar-valued linear measurements. That is, we assumed that we are given a vector $\by$ of $m'$ measurements
\begin{equation}
\label{eqn:scalarMeas}
\by = \mathbf{\Phi} \bx
\end{equation}
where $\mathbf{\Phi} \in \R^{m'\ti dN}$. Our goal is to recover a sparse $\bx \in \cH$ from these measurements. This problem can be formulated as the optimization program
\begin{align*}
(L_0) \qquad \hat{\bx}= \text{argmin}_{\bx \in \cH} \|\bx\|_0 \quad
\mbox{subject to } \qquad \mathbf{\Phi}\bx=\by.
\end{align*}
which is NP-hard. Following \cite{Boufounos09,elra10,fora08,tr06-4}, we instead use the convex program
\begin{align*}
(L_{2,1}) \qquad \hat{\bx}= \text{argmin}_{\bx \in \cH} \|\bx\|_{2,1} \quad
\mbox{ subject to } \qquad \mathbf{\Phi}\bx=\by,
\end{align*}
where the $\|\cdot\|_{2,1}$-norm is defined in \eqref{l21norm}. We shall refer to this program as either mixed $\ell_1/\ell_2$-minimization or $\ell_{2,1}$-minimization. This problem can be solved efficiently.\par
Similarly to the situation in classical compressive sensing, there are alternative methods available to recover sparse signals. We mention in particular the `block' versions of matching and orthogonal matching pursuit studied in \cite{EKB10}. In this paper we concentrate exclusively on $\ell_{2,1}$-minimization.

\subsection{A sufficient condition for uniform recovery}
\label{sec:sufScalar}

Let us first review the known uniform recovery results using random scalar measurements. The results are phrased in terms of the following restricted isometry constants on $\cH_s$, the set of $s$-sparse vectors in $\cH$. If $W_j=\R^d$ for all $j \in [N]$, then our definition coincides with the block restricted isometry constants introduced in \cite{elmi09}.
\begin{definition}
\label{def:RIPHs}
The restricted isometry constant of $\mathbf{\Phi}\in \R^{m'\ti dN}$ on $\cH_s$ is the smallest constant $\theta_s\geq 0$ satisfying
$$(1-\theta_s)\|\bx\|_2^2 \leq \|\mathbf{\Phi}\bx\|_2^2 \leq (1+\theta_s)\|\bx\|_2^2, \qquad \mathrm{for  \ all} \  \bx\in\cH_s.$$
\end{definition}
The following result is essentially \cite[Theorem 2]{elmi09}. It also follows from the proof of \cite[Theorem 4.4]{Boufounos09} (simply replace $\AI$ by $\mathbf{\Phi}$ there).
\begin{theorem}\label{thm:robustScalar}
Suppose that the RIP constant $\theta_{2s}$ of $\mathbf{\Phi}$ on $\cH_s$ satisfies $\theta_{2s} < \sqrt{2}-1$. For $\bx \in \cH$, let noisy measurements $\by = \mathbf{\Phi} \bx + \be$ be given with $\|\be\|_2 \leq \eta$. Let $\hat{\bx}$ be the solution of the $\ell_{2,1}$-minimization program
$$\min_{\bz \in \cH} \|\bz\|_{2,1} \quad \mbox{ subject to }\quad \|\mathbf{\Phi}\bz - \by\|_2 \leq \eta.$$
Then
\begin{align*}
\|\bx - \hat{\bx}\|_2 \leq C_1 \frac{\sigma_s(\bx)_1}{\sqrt{s}} +
C_2 \eta,
\end{align*}
and
$$
\|\bx - \hat{\bx}\|_{2,1} \leq C_1 \sigma_s(\bx)_1 +
C_2 \sqrt{s} \eta,
$$
where the constants $C_1, C_2 >0$ only depend on $\theta_{2s}$.
\end{theorem}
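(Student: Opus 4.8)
The plan is to run the classical Candès-type RIP argument, transported from the scalar setting to the block setting by replacing absolute values $|x_i|$ with block norms $\|x_j\|_2$ throughout and sorting blocks by their $\ell_2$-norm. The one structural fact that makes this transfer legitimate is that $\cH$ is a \emph{linear subspace} of $\R^{dN}$: restricting a vector of $\cH$ to a subset of its blocks (zeroing the rest) keeps each surviving block in its $W_j$, so the restriction is again in $\cH$, and the sum of two vectors of $\cH$ with disjoint block supports lies in $\cH$. Consequently every restricted vector appearing below sits in $\cH_s$ or $\cH_{2s}$, so the constants $\theta_s,\theta_{2s}$ of Definition~\ref{def:RIPHs} apply to them.

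First I would set $\bh = \hat{\bx} - \bx \in \cH$. Since $\|\be\|_2 \le \eta$, the signal $\bx$ is feasible, so minimality gives $\|\hat{\bx}\|_{2,1} \le \|\bx\|_{2,1}$, and as both $\hat{\bx}$ and $\bx$ satisfy the measurement constraint, the triangle inequality gives $\|\mathbf{\Phi}\bh\|_2 \le 2\eta$. Let $T_0 \subset [N]$ index the $s$ blocks carrying the best $s$-term approximation of $\bx$, so $\|\bx_{T_0^c}\|_{2,1} = \sigma_s(\bx)_1$. Manipulating $\|\bx\|_{2,1} \ge \|\bx+\bh\|_{2,1}$ by splitting the right side over $T_0$ and $T_0^c$ and applying the triangle inequality blockwise yields the cone constraint $\|\bh_{T_0^c}\|_{2,1} \le \|\bh_{T_0}\|_{2,1} + 2\sigma_s(\bx)_1$.

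Next I would sort and shift: order the blocks of $T_0^c$ by decreasing $\ell_2$-norm, partition them into consecutive sets $T_1,T_2,\dots$ of size $s$, and put $T_{01} = T_0 \cup T_1$ (at most $2s$ blocks). Since each block norm in $T_{j+1}$ is at most the average of those in $T_j$, one obtains $\sum_{j\ge 2}\|\bh_{T_j}\|_2 \le s^{-1/2}\|\bh_{T_0^c}\|_{2,1}$, and combining this with the cone constraint and $\|\bh_{T_0}\|_{2,1}\le \sqrt{s}\,\|\bh_{T_0}\|_2$ gives $\sum_{j\ge 2}\|\bh_{T_j}\|_2 \le \|\bh_{T_{01}}\|_2 + 2s^{-1/2}\sigma_s(\bx)_1$. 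The heart is the RIP estimate: expanding $\|\mathbf{\Phi}\bh_{T_{01}}\|_2^2 = \langle \mathbf{\Phi}\bh_{T_{01}},\mathbf{\Phi}\bh\rangle - \sum_{j\ge 2}\langle \mathbf{\Phi}\bh_{T_{01}},\mathbf{\Phi}\bh_{T_j}\rangle$, I would lower-bound the left side by $(1-\theta_{2s})\|\bh_{T_{01}}\|_2^2$, bound the first term by $2\eta\sqrt{1+\theta_{2s}}\,\|\bh_{T_{01}}\|_2$, and control the cross terms using the near-orthogonality consequence $|\langle \mathbf{\Phi}\bu,\mathbf{\Phi}\bv\rangle| \le \theta_{2s}\|\bu\|_2\|\bv\|_2$ for disjointly supported $\bu,\bv\in\cH$ of size $\le s$. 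That inequality I would obtain by polarization from Definition~\ref{def:RIPHs} applied to $\bu\pm\bv\in\cH_{2s}$; splitting $T_{01}=T_0\cup T_1$ then gives $\sum_{j\ge 2}|\langle \mathbf{\Phi}\bh_{T_{01}},\mathbf{\Phi}\bh_{T_j}\rangle| \le \sqrt{2}\,\theta_{2s}\|\bh_{T_{01}}\|_2\sum_{j\ge 2}\|\bh_{T_j}\|_2$.

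Assembling these, dividing by $\|\bh_{T_{01}}\|_2$, and inserting the shifting bound produces $(1-(1+\sqrt{2})\theta_{2s})\|\bh_{T_{01}}\|_2 \le 2\sqrt{1+\theta_{2s}}\,\eta + 2\sqrt{2}\,\theta_{2s}\,s^{-1/2}\sigma_s(\bx)_1$, whose left coefficient is positive exactly when $\theta_{2s} < 1/(1+\sqrt{2}) = \sqrt{2}-1$ — precisely the hypothesis, which is what fixes that constant. Solving for $\|\bh_{T_{01}}\|_2$ and then using $\|\bh\|_2 \le \|\bh_{T_{01}}\|_2 + \sum_{j\ge 2}\|\bh_{T_j}\|_2$ yields the first estimate; the second follows from $\|\bh\|_{2,1} \le 2\|\bh_{T_0}\|_{2,1} + 2\sigma_s(\bx)_1 \le 2\sqrt{s}\,\|\bh_{T_{01}}\|_2 + 2\sigma_s(\bx)_1$. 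I do not anticipate a real obstacle, since this is the block analogue of a well-understood scalar proof; the only point genuinely requiring care — and the sole place the block structure enters — is verifying that $\cH$ is closed under block-restriction and disjoint-support addition, so that the $\cH_s$-RIP and its polarized near-orthogonality form apply to each restricted vector. Once that is in hand, the remaining steps are routine norm manipulations with constants depending only on $\theta_{2s}$.
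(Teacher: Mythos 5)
Your proposal is correct, and it is essentially the argument the paper relies on: the paper does not write out a proof but defers to \cite{elmi09} (Theorem 2) and the proof of \cite[Theorem~4.4]{Boufounos09}, both of which are exactly this blockwise adaptation of the Cand\`es RIP argument --- cone constraint, sorting/shifting of blocks, polarization-based near-orthogonality, and the threshold $\theta_{2s}<\sqrt{2}-1$. Your one substantive addition, checking that $\cH$ is closed under block restriction and disjoint-support addition so that Definition~\ref{def:RIPHs} applies to every restricted vector, is precisely the observation that makes the cited scalar-to-block transfer legitimate (it is also why the paper can say ``simply replace $\AI$ by $\mathbf{\Phi}$'').
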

Since $\cH_s$ can be viewed as a union of $N\choose s$ subspaces of dimension $sk$, the next result follows from general bounds on restricted isometry constants of subgaussian matrices acting on unions of subspaces. We refer to \cite[Theorem 3]{Blu11} and \cite[Corollary 5.4]{Dir14} for details.
\begin{theorem}
\label{thm:subgaussianScalar}
If $\mathbf{\Phi}$ is an $m'\ti dN$ random matrix with independent, $\al$-subgaussian rows, then $\bP(\theta_s\geq \theta)\leq \eps$ if
\begin{equation}
\label{eqn:numScalMeas}
m'\gtrsim \al^4\theta^{-2}\max\{s\log(eN/s) + sk, \log(\eps^{-1})\}.
\end{equation}
\end{theorem}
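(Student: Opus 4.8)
The plan is to show that $\cH_s$ is a union of $\binom{N}{s}$ linear subspaces of $\R^{dN}$, each of dimension at most $sk$, and then invoke the general machinery for restricted isometry constants of subgaussian matrices on unions of subspaces. Indeed, fix a support set $S \subset [N]$ with $|S| \leq s$. The vectors $\bx \in \cH$ supported on $S$ satisfy $x_j \in W_j$ for $j \in S$ and $x_j = 0$ otherwise, so they form the linear subspace $V_S := \bigoplus_{j \in S} W_j$, whose dimension is $\sum_{j \in S} \dim(W_j) = |S| k \leq sk$. Since $\cH_s = \bigcup_{|S| \leq s} V_S$, it is a union of at most $\binom{N}{s}$ subspaces of dimension $sk$.

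Next I would quote the general subgaussian RIP bound for unions of subspaces, as stated in \cite[Theorem 3]{Blu11} or \cite[Corollary 5.4]{Dir14}. The typical form of such a result says the following: if $\mathcal{U} = \bigcup_{\ell} V_\ell$ is a union of $L$ subspaces each of dimension at most $D$, and $\mathbf{\Phi}$ has independent $\al$-subgaussian rows, then the restricted isometry constant of $\mathbf{\Phi}$ on $\mathcal{U}$ satisfies $\bP(\theta \geq t) \leq \eps$ provided
\begin{equation*}
m' \gtrsim \al^4 t^{-2} \max\{ D + \log L, \log(\eps^{-1})\}.
\end{equation*}
Substituting $D = sk$ and $L = \binom{N}{s}$, and using the standard estimate $\log\binom{N}{s} \leq s\log(eN/s)$, the bracketed dimension-plus-entropy term becomes $sk + s\log(eN/s)$, which is exactly the first argument of the maximum in \eqref{eqn:numScalMeas}. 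Matching constants then yields the claimed bound.

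The main technical content is therefore entirely outsourced to the cited theorems; the only genuine work in this proof is the bookkeeping of the two geometric parameters. The one point requiring a small amount of care is the entropy estimate: the cited union-of-subspaces bounds control $\theta$ in terms of a covering-number or Gaussian-width quantity for $\mathcal{U}$, and one must verify that for a union of $L$ subspaces of dimension $D$ this quantity scales like $D + \log L$ rather than, say, $D\log L$. This is precisely where the additive (rather than multiplicative) combination $s\log(eN/s) + sk$ in \eqref{eqn:numScalMeas} comes from, and it is the feature that makes the bound sharp. Once that additive scaling is granted from the cited results, the substitution is immediate and no further obstacle arises.
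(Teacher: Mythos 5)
Your proposal is correct and follows exactly the paper's own route: the paper likewise observes that $\cH_s$ is a union of $\binom{N}{s}$ subspaces of dimension $sk$ and cites \cite[Theorem 3]{Blu11} and \cite[Corollary 5.4]{Dir14} for the general subgaussian RIP bound on unions of subspaces, with the same additive $sk + s\log(eN/s)$ accounting via $\log\binom{N}{s}\leq s\log(eN/s)$. Your write-up simply makes explicit the bookkeeping that the paper leaves implicit.
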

As a particular consequence of Theorems~\ref{thm:robustScalar} and \ref{thm:subgaussianScalar}, if $m'$ satisfies (\ref{eqn:numScalMeas}), then with probability at least $1-\eps$ we can robustly recover any $s$-sparse vector $\bx \in \cH$ exactly via $\ell_{2,1}$-minimization from $m'$ scalar-valued subgaussian measurements. One may wonder whether it is possible to decrease this number of measurements if the subspaces $W_j$ are incoherent. In the following section we will show that the answer is negative: one needs at least $m'\gtrsim s\log(eN/s) + sk$ measurements to recover every $s$-sparse vector exactly via $\ell_{2,1}$-minimization. Note that this also implies that the condition on $m'$ in Theorem~\ref{thm:subgaussianScalar} is essentially optimal.

\subsection{A necessary condition for uniform recovery}\label{necessary}

In this section, we investigate the minimal number of scalar-valued measurements required for uniform recovery of sparse signals via $\ell_{2,1}$-minimization. We follow the method used in \cite{Foucart10,FoR13} to establish the minimal number of measurements needed for sparse recovery via $\ell^1$-minimization in classical compressed sensing.
\begin{theorem}\label{uni_optimal}
Let $\mathbf{\Phi} \in \R^{m' \times dN}$ and let $(W_j)_{j=1}^N$ be a collection of subspaces in $\R^d$ with $\dime(W_j)=k$. If every $4s$-sparse vector $\bx \in \cH$ is a minimizer of $\min_{\bz \in \cH} \|\bz\|_{2,1}$ subject to $\mathbf{\Phi}\bz=\mathbf{\Phi}\bx$, then
\begin{align}\label{optimal_cond}
m' \geq c_1 s \log \left(\frac{N}{c_2 s} \right) + c_3sk
\end{align}
where $c_1 \approx 0.46$, $c_2=32$ and $c_3 \approx 0.18$.
\end{theorem}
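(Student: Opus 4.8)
The plan is to follow the classical strategy for lower bounds on the number of measurements in $\ell_1$-recovery (as in \cite{Foucart10,FoR13}), splitting the bound into a combinatorial term of order $s\log(N/s)$ and a dimension-counting term of order $sk$, and then combining the two. The first step is to pass from the recovery hypothesis to a null space condition. Since the hypothesis only asserts that every $4s$-sparse $\bx \in \cH$ is \emph{a} (not necessarily unique) minimizer, I would first extract the $\ell_{2,1}$ null space property at order $2s$: applying minimality to $\bx = \bv_S$ with the competitor $\bv_S - \bv$ for $\bv \in \kernel \mathbf{\Phi} \cap \cH$ and $\card(S) \le 2s$ yields $\|\bv_S\|_{2,1} \le \|\bv_{S^c}\|_{2,1}$. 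Taking $S$ equal to the support of a $2s$-sparse kernel vector forces it to vanish, so $\mathbf{\Phi}$ is injective on $\cH_{2s}$. The factor $4$ in the hypothesis enters because the combinatorial argument below compares pairs of signals supported on $2s$-element sets, whose differences are $4s$-sparse; it is the null space property applied to these $4s$-sparse differences that constrains $\mathbf{\Phi}$, and the extra slack is also what I expect to be needed to upgrade the non-strict inequality above to the strict/stable form the counting requires.

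For the combinatorial term I would adapt the scalar argument, using the fusion structure only through a single fixed direction per subspace: fix a unit vector $u_j \in W_j$ for each $j$, and for a support $T$ let $\bx^T$ denote the signal carrying the block $u_j$ on $T$ and $0$ elsewhere. These signals depend on $T$ in a way that does not see $k$. The key input is the combinatorial lemma of \cite{FoR13}, producing a family $T_1,\dots,T_n$ of $2s$-element subsets of $[N]$ with pairwise intersections below $s$ and $\log n \gtrsim s\log\bigl(N/(32\,s)\bigr)$; the constant $32$ is exactly what is tracked through the intersection threshold and the support size, and the differences $\bx^{T_i}-\bx^{T_j}$ are $4s$-sparse, matching the recovery hypothesis. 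I would then invoke the Foucart--Rauhut counting argument, which combines the null space property on these differences with the family $(T_i)$ to conclude $m' \ge c_1'\, s\log\bigl(N/(32\,s)\bigr)$ for a constant $c_1' \approx 0.92$; in the present setting this argument applies verbatim once the directions $u_j$ are fixed, so that $k$ does not appear in this term.

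The dimension term comes from injectivity. Having established that $\mathbf{\Phi}$ is injective on $\cH_{2s}$, I restrict to a single support $S$ of size $2s$: a nonzero element of $\bigoplus_{j \in S} W_j$ is $2s$-sparse and would lie in the kernel, a contradiction, so $\mathbf{\Phi}$ is injective on this subspace. Since $\dime\bigl(\bigoplus_{j \in S} W_j\bigr) = 2sk$, this gives $m' \ge 2sk$. Finally I would combine the two lower bounds: from $m' \ge \max\bigl\{\,c_1'\, s\log(N/(32\,s)),\ 2sk\,\bigr\}$ and the elementary inequality $\max\{A,B\} \ge \alpha A + \beta B$, valid for $\alpha,\beta \ge 0$ with $\alpha+\beta\le 1$, the choice $\alpha=\tfrac12$, $\beta=0.09$ reproduces the stated constants $c_1 = \tfrac12 c_1' \approx 0.46$, $c_2 = 32$, and $c_3 = 0.09\cdot 2 = 0.18$.

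The hard part will be the combinatorial step: faithfully porting the scalar lower bound to the mixed $\ell_{2,1}$ norm while extracting the explicit constant inside the logarithm, and in particular making the non-strict null space property---all that the ``a minimizer'' hypothesis directly supplies---do the work of the strict version used in the scalar counting. My expectation is that the jump from $2s$ to $4s$ in the hypothesis is precisely the device that resolves this, buying a uniform gap on the relevant $4s$-sparse differences; once that gap is in hand, the fixed directions $u_j$ reduce the fusion-frame count to the classical one and the constants carry over with only bookkeeping changes.
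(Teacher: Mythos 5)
Your proposal is correct, and it takes a genuinely different route from the paper. The paper obtains both terms of \eqref{optimal_cond} from a \emph{single} counting argument in the quotient space $X=\cH\,/\,\kernel\ \mathbf{\Phi}_{|\cH}$: it packs the spherical shell of each $W_i$ with $q=2^k-(1/2)^k$ points (Lemma~\ref{covering_number}), thins the block configurations on each support via the Gilbert--Varshamov bound (Lemma~\ref{Gilbert}), crosses the result with the supports from Lemma~\ref{subsets}, and applies the volumetric bound of Lemma~\ref{covering_number2} once to the resulting $1$-separated subset of $4B_X$, getting $n(q/4)^s\le 9^{m'}$; taking logarithms yields the sum of the two terms simultaneously. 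In particular $c_2=32$ arises there as $8\times 4$, the constant from Lemma~\ref{subsets} times the loss in Gilbert--Varshamov --- not, as you guessed, from the intersection/support bookkeeping alone; your route, which never invokes a code, would give the better constant $8$ inside the logarithm. You instead decouple the two terms: a scalar-type counting bound $m'\ge s\log(N/(8s))/\log 3$ using one fixed unit direction $u_j\in W_j$ per subspace (so $k$ never enters, and no shell packing or coding lemma is needed), a rank bound $m'\ge 2sk$ from injectivity of $\mathbf{\Phi}$ on $\bigoplus_{j\in S}W_j$ with $\card(S)=2s$, and the elementary combination $\max\{A,B\}\ge\alpha A+\beta B$ for $\alpha+\beta\le 1$. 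All three steps are sound, and the approach is arguably more modular; indeed, taking $\beta=\tfrac12$ instead of your $0.09$ would give the \emph{stronger} conclusion $m'\ge \tfrac{1}{2\log 3}\,s\log\bigl(N/(8s)\bigr)+sk$, i.e.\ $c_2=8$ and $c_3=1$.

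One clarification on your closing paragraph: the ``hard part'' you anticipate is not actually hard, and no strictness upgrade or uniform gap is needed anywhere. Non-strict minimality of every $4s$-sparse $\bx$ already gives the exact identity $\|[\bx]\|=\|\bx\|_{2,1}$, since the infimum defining the quotient norm is attained at $\bv=0$; equivalently, the non-strict null space inequality gives $\|\bu-\bw\|_{2,1}\ge\|\bu\|_{2,1}$ for every $4s$-sparse $\bu$ and $\bw\in\kernel\ \mathbf{\Phi}_{|\cH}$. Applying this identity to the ($4s$-sparse) differences of your $2s$-sparse test vectors is literally all that the counting argument uses --- this is exactly how the paper proceeds as well, and it is the only place where the $4s$ in the hypothesis enters.
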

In the proof of Theorem~\ref{uni_optimal} we use the following combinatorial lemma (see e.g.\ \cite[Lemma 10.12]{FoR13} for a proof).
\begin{lemma}\label{subsets}
Given integers $s < N$, there exists an $n\in \N$ such that
\begin{align}\label{subset_count}
n \geq \left( \frac{N}{8s} \right)^{s}
\end{align}
and subsets $I_1, \ldots, I_n$ of $[N]$ such that each $I_i$ has cardinality $2s$ and
$$ \card(I_i \cap I_\ell) < s \ \ \text{ whenever } i \neq \ell.$$
\end{lemma}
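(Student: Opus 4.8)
The plan is to prove Lemma~\ref{subsets} by a greedy packing argument of Gilbert--Varshamov type. First I would let $\mathcal F$ be a family of $2s$-element subsets of $[N]$ that is \emph{maximal} with respect to the property that $\card(I\cap I')<s$ for every pair of distinct $I,I'\in\mathcal F$; such a family exists because there are only finitely many $2s$-subsets of $[N]$. Writing $\mathcal F=\{I_1,\dots,I_n\}$ with $n=\card\mathcal F$, the required intersection bound holds by construction, so the entire content of the lemma is the lower bound \eqref{subset_count} on $n$. The bound is vacuous unless $N\ge 8s$, since otherwise $(N/8s)^s\le 1$ and a single $2s$-subset already works, so I would assume $N\ge 8s$ throughout.

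To bound $n$ from below I would exploit maximality: for \emph{every} $J\in\binom{[N]}{2s}$ there must be some index $i$ with $\card(J\cap I_i)\ge s$, for otherwise $J$ could be adjoined to $\mathcal F$. Hence the $\binom{N}{2s}$ sets $J$ are covered by the $n$ ``bad neighbourhoods'' $\{J:\card(J\cap I_i)\ge s\}$, which gives $\binom{N}{2s}\le n\,D$, where $D:=\card\{J\in\binom{[N]}{2s}:\card(J\cap I_i)\ge s\}$ does not depend on $i$. Classifying each such $J$ by the number $j\in\{0,\dots,s\}$ of its elements lying outside $I_i$ produces the exact formula $D=\sum_{j=0}^{s}\binom{2s}{j}\binom{N-2s}{j}$, and therefore $n\ge \binom{N}{2s}/D$.

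It then remains to check $\binom{N}{2s}/D\ge (N/8s)^s$. For $N\ge 8s$ the summands of $D$ decrease at least geometrically as $j$ moves down from $s$ (the ratio $\tfrac{j^2}{(2s-j+1)(N-2s-j+1)}$ of consecutive terms stays below $1/5$), so $D\le \tfrac54\binom{2s}{s}\binom{N-2s}{s}$. I would then rewrite the key ratio as $\binom{N}{2s}/\big(\binom{2s}{s}\binom{N-2s}{s}\big)=N!\,(N-3s)!/\big(s!\,((N-2s)!)^2\,\binom{2s}{s}^2\big)$, bound the factorial quotient $N!\,(N-3s)!/((N-2s)!)^2=\prod_{k=0}^{s-1}\frac{(N-k)(N-s-k)}{N-2s-k}$ from below by $(N-2s)^s$ through a term-by-term estimate (each factor exceeds $N-2s$ once $N>\tfrac52 s$), and combine this with $\binom{2s}{s}\le 4^s$ and the Stirling bound $s!\le e\sqrt{s}\,(s/e)^s$. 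The factor $e^s$ contributed by Stirling's formula is exactly what improves the base of the exponential from $16$ to below $8$, after using $N-2s\ge \tfrac34 N$.

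The only genuinely delicate point, and the step I expect to be the main obstacle, is this final constant-chasing: the crude estimates ($\binom{2s}{s}\le 4^s$ used twice and $s!\le s^s$) yield only $n\gtrsim (N/16s)^s$, so one must retain the $e^s$ gain from Stirling's formula and absorb the residual factor of order $s^{-1/2}$ into the narrow slack between $\tfrac{3e}{64}\approx 0.127$ and $\tfrac18=0.125$; the handful of small values of $s$, where this slack is thin, would be verified by hand. If a cleaner exposition were preferred, a probabilistic deletion argument (discard one set from each of the expected $\binom{n}{2}D/\binom{N}{2s}$ colliding pairs among $n$ uniformly random $2s$-subsets) reaches the identical bound up to a factor of $2$ and could replace the greedy step.
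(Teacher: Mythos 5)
Your architecture is sound, and it is in fact the same Gilbert--Varshamov-type argument behind the proof the paper relies on: the paper proves nothing in-house but cites \cite[Lemma 10.12]{FoR13}, which is this lemma (with $s$ there replaced by $2s$) and is proved by exactly this maximal-family counting. Your individual steps check out up to the end: maximality does give $\binom{N}{2s}\le nD$, the count $D=\sum_{j=0}^{s}\binom{2s}{j}\binom{N-2s}{j}$ is exact, the consecutive-term ratio $\frac{j^2}{(2s-j+1)(N-2s-j+1)}$ is indeed below $1/5$ when $N\ge 8s$, and your factorial rewriting is correct. The genuine gap is the final constant-chasing. With the estimates you commit to --- $\prod_{k=0}^{s-1}\frac{(N-k)(N-s-k)}{N-2s-k}\ge(N-2s)^s\ge\left(\tfrac34 N\right)^s$, $\binom{2s}{s}\le 4^s$, and $s!\le e\sqrt{s}\,(s/e)^s$ --- the chain yields $n\ge\frac{4}{5e\sqrt{s}}\left(\frac{3e}{8}\right)^s\left(\frac{N}{8s}\right)^s$, and since $\frac{3e}{8}\approx 1.0194$, the prefactor $\frac{4}{5e\sqrt{s}}\left(\frac{3e}{8}\right)^s$ stays below $1$ for all $2\le s\lesssim 200$. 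That is not ``a handful of small values of $s$ verified by hand'': it is roughly two hundred values of $s$, each with infinitely many admissible $N$, so no finite hand-check can close it; as written, the proof does not establish the lemma in that range.

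The gap is fixable inside your own scheme by tightening two estimates instead of case-checking. First, each factor of your product exceeds $N$, not merely $N-2s$: indeed $(N-k)(N-s-k)-N(N-2s-k)=(s-k)N+k(s+k)>0$ for $0\le k\le s-1$ (the denominators $N-2s-k$ being positive since $N\ge 8s$), so the product is at least $N^s$; this alone improves the base from $\frac{3e}{8}$ to $\frac{e}{2}\approx 1.36$ and shrinks the failing range to $s\le 7$. Second, use the sharp central-binomial bound $\binom{2s}{s}\le 4^s/\sqrt{\pi s}$; combining, $n\ge\frac45\cdot\frac{\pi\sqrt{s}}{e}\left(\frac{e}{2}\right)^s\left(\frac{N}{8s}\right)^s\ge\frac{2\pi}{5}\left(\frac{N}{8s}\right)^s$ for every $s\ge 1$ and $N\ge 8s$, with no exceptional cases at all. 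One further small caveat: your reduction ``for $N<8s$ a single $2s$-subset works'' tacitly needs $N\ge 2s$; for $s<N<2s$ no $2s$-subset of $[N]$ exists (a degeneracy already present in the lemma's statement, but worth flagging rather than passing over).
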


\subsubsection{Covering and packing numbers}

In the proof of Theorem~\ref{uni_optimal} we use some covering number estimates. Recall the following standard terminology. Let $T$ be a subset of a metric space $(X,d)$.
For $t > 0$, the covering number $\cN(T,d,t)$ is defined as the smallest integer $\cN$ such that $T$ can be covered with balls $B(x_\ell,t)= \{x
\in X, d(x,x_\ell) \leq t \}$, $x_\ell \in T$, $\ell \in [\cN]$,
i.e.,
\begin{equation}
\label{eqn:ballCover}
T \subset \bigcup_{\ell =1}^\cN B(x_\ell,t).
\end{equation}
The packing number $\mathcal{P}(T,d,t)$ is defined, for $t > 0$, as the
maximal integer $\mathcal{P}$ such that there are points $x_\ell \in
T$, $\ell \in [\mathcal{P}]$, which are $t$-separated, i.e.,
$d(x_\ell,x_k) > t$ for all $k,\ell \in [\mathcal{P}], k \neq \ell$. If $X= \R^n$ is a normed vector space and the metric $d$ is induced
by a norm $\|\cdot\|$ via $d(u,v)= \|u-v\|$, then we also write $\cN(T,\|\cdot\|,t)$ and $\mathcal{P}(T,\|\cdot\|,t)$.
Next we state a variation of a well-known covering number bound, see e.g.\ \cite[Lemma 4.16]{pi99}.
\begin{lemma}\label{covering_number}
Let $\|\cdot\|$ be any norm on $\R^n$ and let $U=\{x\in \R^n:
\frac{1}{2} \leq \|x\| \leq 2\}$. For any $t > 0$,
\begin{align}
\left(\frac{2}{t}\right)^n - \left(\frac{1}{2t}\right)^n \leq \cN(U, \|\cdot\|,t) \leq \mathcal{P}(U, \|\cdot\|,t).
\end{align}
\end{lemma}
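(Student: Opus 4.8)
The plan is to establish the two inequalities separately: the right-hand inequality $\cN(U,\|\cdot\|,t)\leq \mathcal{P}(U,\|\cdot\|,t)$ is the standard comparison between covering and packing numbers, while the left-hand inequality is a volumetric lower bound on the covering number. Both are classical arguments; the annulus structure of $U$ is what produces the precise constants.

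For the right-hand inequality, I would take a \emph{maximal} $t$-separated subset $\{x_1,\dots,x_P\}$ of $U$, so that $P=\mathcal{P}(U,\|\cdot\|,t)$ by definition. Maximality means that no further point of $U$ can be adjoined while preserving $t$-separation; hence for every $x\in U$ there is some $\ell$ with $\|x-x_\ell\|\leq t$, i.e.\ $x\in B(x_\ell,t)$. Thus the closed balls $B(x_\ell,t)$, $\ell\in[P]$, cover $U$ and have their centres $x_\ell$ in $U$, which is exactly what the covering-number definition requires. This gives $\cN(U,\|\cdot\|,t)\leq P = \mathcal{P}(U,\|\cdot\|,t)$.

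For the left-hand inequality, I would compare Lebesgue volumes. Write $B=\{x\in\R^n:\|x\|\leq 1\}$ for the closed unit ball, so that each covering ball $B(x_\ell,t)=x_\ell+tB$ is a translate of $tB$ and therefore has volume $t^n\vol(B)$. Since $U=2B\setminus\{x:\|x\|<1/2\}$ and the sphere $\{\|x\|=1/2\}$ is a null set, its volume is $\vol(U)=(2^n-2^{-n})\vol(B)$. If $B(x_1,t),\dots,B(x_{\cN},t)$ cover $U$ with $\cN=\cN(U,\|\cdot\|,t)$, then subadditivity of volume yields
$$(2^n-2^{-n})\vol(B)=\vol(U)\leq \sum_{\ell=1}^{\cN}\vol\big(B(x_\ell,t)\big)=\cN\,t^n\vol(B).$$
Dividing by $t^n\vol(B)$ gives $\cN\geq (2/t)^n-(1/(2t))^n$, which is the claimed bound.

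I do not expect a serious obstacle, since both steps are routine. The only points requiring mild care are that the volume comparison does not need the covering balls to be contained in $U$ (it suffices that they cover $U$, so sticking out is harmless), and that one must invoke $0<\vol(B)<\infty$, valid for any norm on $\R^n$, in order to cancel the common factor $\vol(B)$. The subtracted term $(1/(2t))^n$ arises precisely from the inner radius $1/2$ of the annulus $U$, so the result is genuinely an annulus estimate rather than a plain ball estimate.
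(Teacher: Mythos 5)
Your proof is correct and follows essentially the same route as the paper: a volume comparison $\cN\, t^n \vol(B) \geq \vol(U) = \vol(2B) - \vol(B/2)$ for the left-hand inequality, with the right-hand inequality handled by the standard maximal-separated-set argument. The only difference is cosmetic: the paper cites the covering-versus-packing comparison as standard (referring to the literature), whereas you write out that routine argument explicitly.
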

\begin{proof}
For the first inequality, let $\{x_1,\ldots, x_\cN\} \subset U$ be a minimal set satisfying (\ref{eqn:ballCover}). Let $B$ be the unit ball with respect to $\|\cdot\|$ and let $\vol$ be the Euclidean volume on $\R^n$. Then
$$
\cN \vol(tB) \geq \vol \left( \bigcup_{\ell=1}^\cN B(x_\ell,
  t) \right) \geq \vol(U) = \vol(2B) - \vol(B/2).
$$
Since $\vol(tB)=t^n\vol(B)$, we have $\cN t^n \vol(B) \geq 2^n \vol(B) - (1/2)^n\vol(B)$. This yields $\cN \geq (2/t)^n - (1/2t)^n$ as desired. The second inequality is standard, see e.g.\ \cite[Lemma C.2]{FoR13}.
\end{proof}
We will also use the following standard estimate, see e.g.\ \cite[Proposition C.3]{FoR13} for a proof.
\begin{lemma}\label{covering_number2}
Let $\|\cdot\|$ be any norm on $\R^n$ and let $U$ be a subset of the unit ball $B=\{x\in \R^n: \|x\| \leq 1\}$. Then for any $t > 0$,
\begin{align}
\cN(U, \|\cdot\|,t) \leq \mathcal{P}(U, \|\cdot\|,t) \leq \left(1+\frac{2}{t}\right)^n.
\end{align}
\end{lemma}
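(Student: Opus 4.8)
The plan is to prove the two inequalities separately, both by elementary arguments. For the first inequality $\cN(U, \|\cdot\|, t) \leq \mathcal{P}(U, \|\cdot\|, t)$, I would exploit the fact that a maximal packing is automatically a cover. Concretely, let $\{x_1, \ldots, x_P\} \subset U$ be a maximal $t$-separated set, so that $P = \mathcal{P}(U, \|\cdot\|, t)$. If some $x \in U$ satisfied $\|x - x_\ell\| > t$ for every $\ell$, then $\{x_1, \ldots, x_P, x\}$ would be a strictly larger $t$-separated subset of $U$, contradicting maximality. Hence every $x \in U$ lies within distance $t$ of some $x_\ell$, so the closed balls $B(x_\ell, t)$ cover $U$, and therefore $\cN(U, \|\cdot\|, t) \leq P = \mathcal{P}(U, \|\cdot\|, t)$.

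For the second inequality, I would use a standard volume-packing argument. Let $\{x_1, \ldots, x_P\} \subset U$ be $t$-separated with $P = \mathcal{P}(U, \|\cdot\|, t)$, and let $B$ denote the closed unit ball of $\|\cdot\|$. The balls $B(x_\ell, t/2)$ are pairwise disjoint: if a point $x$ lay in both $B(x_i, t/2)$ and $B(x_j, t/2)$ with $i \neq j$, the triangle inequality would give $\|x_i - x_j\| \leq \|x_i - x\| + \|x - x_j\| \leq t$, contradicting $\|x_i - x_j\| > t$. Moreover, since each $x_\ell \in U \subseteq B$ has $\|x_\ell\| \leq 1$, any $x$ with $\|x - x_\ell\| \leq t/2$ satisfies $\|x\| \leq 1 + t/2$, so every ball $B(x_\ell, t/2)$ is contained in $(1+t/2)B$. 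Writing $\vol$ for Lebesgue measure on $\R^n$ and using the scaling $\vol(rB) = r^n \vol(B)$, disjointness together with this containment yields $P \cdot (t/2)^n \vol(B) \leq (1 + t/2)^n \vol(B)$, whence $P \leq ((1 + t/2)/(t/2))^n = (1 + 2/t)^n$.

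The argument is entirely routine; the only points that need care are bookkeeping ones. One must match the strict separation $\|x_\ell - x_k\| > t$ in the definition of the packing number against the closed balls used in the definition of the covering number: this is precisely what makes the maximality argument close in the first step and what forces the disjointness to be strict in the second. It is also worth noting that $\vol(B)$ is finite and positive because $\|\cdot\|$ is a norm on the finite-dimensional space $\R^n$, so it cancels cleanly from both sides of the volume inequality, and no measurability or regularity assumption on $U$ itself is required.
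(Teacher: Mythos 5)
Your proof is correct. The paper does not prove this lemma itself but defers to \cite[Proposition C.3]{FoR13}, and your argument---a maximal $t$-separated set is automatically a $t$-cover, and the disjoint balls $B(x_\ell,t/2)$ sit inside $(1+t/2)B$ so a volume comparison gives $\mathcal{P}\leq(1+2/t)^n$---is exactly the standard proof given there, with the bookkeeping about strict separation versus closed balls handled properly.
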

Moreover, we will need the \textit{Gilbert-Varshamov} bound from coding theory \cite{Gilbert52,Varshamov57}.
\begin{lemma}\label{Gilbert}(Gilbert-Varshamov bound) Let $A_1,\ldots,A_l$ be sets, each consisting of $q$ elements. Let $d_H$ be the Hamming distance on $A_1\ti\cdots\ti A_l$. Then, for any $t\in [l]$,
\begin{align*}
\mathcal{P}(A_1\ti\cdots\ti A_l,d_H,t) \geq \frac{q^l}{\sum_{i=0}^{t-1} \binom{l}{i} (q-1)^i }.
\end{align*}
\end{lemma}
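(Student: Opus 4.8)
The plan is to establish the bound by the standard greedy (volumetric) argument that underlies the Gilbert--Varshamov construction. Write $\mathcal{A} = A_1 \times \cdots \times A_l$, so that $|\mathcal{A}| = q^l$, and let $\mathcal{C} \subseteq \mathcal{A}$ be a packing of maximal size $\mathcal{P} = \mathcal{P}(\mathcal{A}, d_H, t)$, i.e.\ a maximal set of codewords that are pairwise $t$-separated in the Hamming metric. The first step is to convert this packing into a \emph{covering} by exploiting maximality. Because $\mathcal{C}$ cannot be enlarged, no point $y \in \mathcal{A}$ can be $t$-separated from all of $\mathcal{C}$ simultaneously; equivalently, every $y \in \mathcal{A}$ lies within some fixed Hamming radius $r$ of at least one codeword $c$. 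Thus the closed Hamming balls $\{y : d_H(y,c) \le r\}$, $c \in \mathcal{C}$, cover $\mathcal{A}$, where the radius $r$ is dictated by the separation threshold $t$.

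The second step is to count the size of a Hamming ball. For a fixed center $c$, the number of tuples at Hamming distance exactly $i$ from $c$ is $\binom{l}{i}(q-1)^i$: one chooses the $i$ coordinates in which to differ in $\binom{l}{i}$ ways, and in each such coordinate the symbol must be one of the $q-1$ alternatives to $c$'s entry, using crucially that every factor $A_j$ has the same cardinality $q$. Summing over $0 \le i \le r$ shows that each ball contains exactly $\sum_{i=0}^{r}\binom{l}{i}(q-1)^i$ points, independently of the center. The final step is a pigeonhole/volume comparison: since the $\mathcal{P}$ balls cover all $q^l$ points of $\mathcal{A}$, we obtain $q^l \le \mathcal{P}\sum_{i=0}^{r}\binom{l}{i}(q-1)^i$, and rearranging yields the claimed lower bound on $\mathcal{P}$.

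The only delicate point --- and the one I would be most careful about --- is the exact value of the covering radius $r$, since this is precisely what fixes the upper summation index in the denominator. The separation convention (pairwise distance $> t$ versus $\ge t$, recalling that $d_H$ takes integer values) determines whether the maximality argument produces covering radius $r = t$ or $r = t-1$, and hence whether the sum runs up to $t$ or to $t-1$; matching the stated index $t-1$ is purely a matter of tracking this off-by-one correctly against the definition of $t$-separation. Beyond this bookkeeping the argument is elementary: it requires only that the Hamming ball volume be uniform across centers (which holds because each $A_j$ has exactly $q$ elements) and that $t \in [l]$ keep the binomial sum well defined.
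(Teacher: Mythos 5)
The paper itself does not prove this lemma at all --- it quotes it from the coding-theory literature \cite{Gilbert52,Varshamov57} --- so there is no internal proof to compare against. Your greedy/volumetric argument is the standard one, and its skeleton is correct: a maximum-cardinality packing is maximal under inclusion, maximality converts the packing into a covering by Hamming balls, the ball volume $\sum_{i}\binom{l}{i}(q-1)^i$ is independent of the center because every $|A_j|=q$, and the pigeonhole comparison $q^l \le \mathcal{P}\cdot(\text{ball volume})$ gives the bound.

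However, the ``delicate point'' you flag at the end is not mere bookkeeping, and you must resolve it rather than leave it open, because under one of the two readings the stated inequality is actually false. If ``$t$-separated'' is read with the strict inequality $d_H > t$ --- which is what the paper's general definition of $\mathcal{P}(T,d,t)$ literally says --- then, since $d_H$ is integer-valued, maximality only yields a covering by balls of radius $t$, and your argument gives the weaker denominator $\sum_{i=0}^{t}\binom{l}{i}(q-1)^i$; moreover the claimed bound fails under this reading (take $l=1$, $t=1$: any two distinct points are at distance exactly $1$, never $>1$, so $\mathcal{P}=1$, while the right-hand side equals $q$). The lemma is true, and your argument yields it exactly, under the non-strict convention that the packing points are pairwise at distance $\ge t$: then maximality of $\mathcal{C}$ means every $y$ admits some $c\in\mathcal{C}$ with $d_H(y,c)\le t-1$, the covering radius is $r=t-1$, and $q^l \le \mathcal{P}\sum_{i=0}^{t-1}\binom{l}{i}(q-1)^i$ as claimed. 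This non-strict reading is also the one the paper actually uses downstream: Lemma~\ref{Gilbert} is applied with $l=2s$, $t=s$ to produce sets whose distinct elements differ in at least $s$ coordinates. So complete the proof by fixing the convention $d_H \ge t$ in the maximality step, and note the (harmless but real) discrepancy between that convention and the paper's strict-inequality definition of packing numbers.
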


\subsubsection{Proof of Theorem~\ref{uni_optimal}}

First recall that we restrict our attention to the set
$$
\cH=\{\bx=(x_i)_{i=1}^N : x_i \in W_i, \ \forall i \in [N]\}
\subset \R^{dN}.
$$
Let $\mathbf{\Phi}_{|\cH}$ be the restriction of $\mathbf{\Phi}$ to $\cH$ and consider the quotient space
$$ X:= \cH \ / \ \kernel \ \mathbf{\Phi}_{|\cH} = \{ [\bx]:= \bx + \kernel \ \mathbf{\Phi}_{|\cH}, \ \bx \in \cH\},$$
which is equipped with the quotient norm
$$ \| [\bx]\| := \inf_{\bv \in \kernel \ \mathbf{\Phi}_{|\cH}} \|\bx- \bv\|_{2,1}, \
\bx \in \cH.
$$
Let $B_X$ be the unit ball of $X$
with respect to this norm. Given a $4s$-sparse vector $\bx \in \cH$, we notice that every vector $\bz=\bx-\bv$ with $\bv \in \kernel
\ \mathbf{\Phi}_{|\cH}$ satisfies $\mathbf{\Phi} \bz = \mathbf{\Phi} \bx$. Thus, the
assumption of the theorem
gives $\|[\bx]\| = \|\bx\|_{2,1}$. \\
Next we pack the spherical shells $S_i = \{ y \in W_i : \frac{1}{2s}
\ \leq \|y\|_2 \leq \frac{2}{s} \}$ of the subspaces $W_i$.
Since $\dim(W_i)=k$, Lemma~\ref{covering_number} yields that
we can find such a packing with packing distance of $1/s$ where
\begin{align}\label{pack_lower}
\mathcal{P}(S_i,\|\cdot\|_2,1/s) \geq 2^k -(1/2)^k =: q.
\end{align}
For each $i$, fix a packing set $T_i$ of $S_i$ with this cardinality.
Now let $I_1,\ldots, I_n$ be the sets introduced in Lemma~\ref{subsets}, and for each $j \in [n]$ define the set
of $2s$-sparse vectors with support set from $I_j$
$$
\mathcal{T}(I_j) := \{ \bx \in \cH : \supp (\bx) = I_j, \ x_i
\in T_i \ \text{ for } i \in I_j\}.
$$
It is clear that for each $j$, the total number of vectors in $\mathcal{T}(I_j)$ is at least $q^{2s}$. We wish to find a large
enough subset of each $\mathcal{T}(I_j)$, say $\widehat{\mathcal{T}}(I_j)$, such that any two distinct vectors $\bx,\by \in
\widehat{\mathcal{T}}(I_j)$ satisfy $\card\{i: x_i \neq y_i\} \geq s$.
Lemma~\ref{Gilbert}, applied with $l=2s$ and $t=s$, says that there exists such a subset $\widehat{\mathcal{T}}(I_j)$ satisfying
\begin{align*}
\card(\widehat{\mathcal{T}}(I_j)) &\geq \frac{q^{2s}}{\sum_{i=0}^{s-1}
\binom{2s}{i} (q-1)^i } \geq \frac{q^{2s}}{2^{2s} q^s} \geq (q/4)^s.
\end{align*}
Let us define the set
$$
\mathcal{V}= \bigcup_{j=1}^n \widehat{\mathcal{T}}(I_j).
$$
We claim that the set $\{[\bx] : \bx \in \mathcal{V}\}$ is a 1-separated subset of the scaled unit ball $4 B_X$.
Observe that a vector $\bx \in \mathcal{V}$ is supported on some $I_j$, $j \in [n]$, of cardinality $2s$. Therefore,
$$
\|[\bx]\| = \|\bx\|_{2,1} = \sum_{i \in I_j} \|x_i\|_2 \leq 2s \cdot
\frac{2}{s} = 4,
$$
since $x_i \in S_i$. We now distinguish between two cases. First
assume that $\bx, \by \in \mathcal{V}$ both belong to the same set
$\widehat{\mathcal{T}}(I_j)$ for some $j$. Then we know that $\card\{i:
x_i \neq y_i\} \geq s$ holds. Whenever $x_i \neq y_i$, we have
$\|x_i-y_i\|_2 \geq 1/s$ since $x_i, y_i \in T_i$. Hence,
\begin{align*}
\|[\bx] - [\by] \| = \|[ \bx - \by]\| = \|\bx -\by \|_{2,1} =
\sum_{i: x_i \neq y_i} \|x_i-y_i\|_2 \geq s \cdot \frac{1}{s} = 1.
\end{align*}
The second equality holds since $\bx - \by$ is $4s$-sparse. For the
case that $\bx \in \widehat{\mathcal{T}}(I_j)$ and $\by \in
\widehat{\mathcal{T}}(I_\ell)$ with $j \neq \ell$, observe that
$\bx-\by$ is $4s$-sparse and the symmetric difference of the support
sets satisfies $\card(I_j \bigtriangleup I_\ell) \geq 2s$. It
follows that
\begin{align*}
\|[\bx] - [\by] \| &= \|[ \bx - \by]\| = \|\bx -\by \|_{2,1} \geq
\sum_{i\in I_j \bigtriangleup I_\ell} (\|x_i\|_2 + \|y_i\|_2) \geq 2s
\cdot \frac{1}{2s} = 1,
\end{align*}
since $x_i,y_i \in S_i$. This proves our claim that $\mathcal{V}$
separates $4 B_X$ by 1 and $\card(\mathcal{V}) \geq n (q/4)^s$. The space $X$ has dimension $r := \rank(\mathbf{\Phi}) \leq m'$. Applying Lemma~\ref{covering_number2} with $4B_X$, this implies that $n (q/4)^s\leq 9^r \leq 9^{m'}$. In view of \eqref{subset_count}, we obtain for $k \geq 1$
$$
 9^{m'} \geq \left(\dfrac{N}{8s}\right)^s \left(\frac{2^k
    -(1/2)^k}{4}\right)^s \geq \left(\dfrac{N}{32s}\right)^s \left(\frac{3}{2}\right)^{k s},
$$
since $2^k - (1/2)^k \geq (3/2)^k$ for $k \geq 1$.  Taking the
logarithm on both sides gives the desired result.
\qed

\section{Sparse recovery using vector-valued measurements}

We now consider the problem of recovering a fusion frame sparse vector $\bx=(x_j)_{j=1}^N \in \cH$ from vector-valued measurements. Our model consists of taking $m$ linear combinations, i.e.,
$$
\by=(y_i)_{i=1}^m = \left(\sum_{j=1}^N a_{ij} x_j \right)_{i=1}^m,
\ \ y_i \in \R^d.
$$
The coefficient matrix $A=(a_{ij})_{i\in [m],j\in [N]} \in \R^{m \times N}$ is called the measurement matrix in this context. From a mathematical perspective, we can view the vector-valued measurement scheme as a special case of (\ref{eqn:scalarMeas}) as follows. To the matrix $A$ we associate the Kronecker product $\AI:= A\ot I_d$, which can be represented as a block matrix ${\bf A_I}=(a_{ij} I_d)_{i \in [m], j\in[N]} \in \R^{dm \times dN}$, where $I_d$ is the identity matrix of size $d \times d$. With this notation we can concisely formulate our measurement scheme as
\begin{equation}
\label{eqn:vectorMeas}
\by = {\bf A_I} \bx.
\end{equation}
Even though (\ref{eqn:vectorMeas}) is a special case of (\ref{eqn:scalarMeas}), we remark that it is appropriate to consider $m$ (rather than $m'=md$) as the number of measurements, since this represents the number of performed measurement operations.\par
Our goal is to recover a sparse $\bx \in \cH$ from these measurements. Analogously to the case of scalar-valued measurements, we do this using the $\ell_{2,1}$-minimization program
\begin{align*}
(L_{2,1}) \qquad \hat{\bx}= \text{argmin}_{\bx \in \cH} \|\bx\|_{2,1} \quad
\mbox{ subject to } \qquad \AI \bx=\by,
\end{align*}
where the $\|\cdot\|_{2,1}$-norm is defined in \eqref{l21norm}.

\subsection{A nonuniform recovery result}

Before giving details on the main uniform recovery result of this paper, we recall a result from \cite{Ayaz13} where the first and third named author considered the recovery of a fixed sparse signal from random measurements.

Recall that a random variable $\xi$ is called $\al$-subgaussian if $\Prob(|\xi| >t) \leq 2 \exp(-t^2/2\al^2)$. The entries $A_{ij} = \xi_{ij}$ of an $m \times N$ \emph{$\al$-subgaussian matrix} $A$ are independent, mean-zero, variance one, $\al$-subgaussian random variables. Examples of $1$-subgaussian matrices are the standard Gaussian matrix, whose entries are independent standard Gaussian random variables, and the Bernoulli matrix, whose entries are independent random variables taking the values $\pm 1$ with equal probability.
\begin{theorem}\label{nonuni_main}Let $(W_j)_{j=1}^N \subset \R^d$ be $k$-dimensional subspaces with coherence
$\lambda \in [0,1]$ and fix $\bx \in \cH$. Let $A \in
\R^{m \times N}$ be a Bernoulli or Gaussian matrix and assume that
\begin{align}\label{nonuni_cond}
m \geq C (1 + \lambda s) \log^{\beta}( Nsk )
\log(\vare^{-1}).
\end{align}
Let noisy measurements $\by = \AI \bx + \be$
be given with $\|\be\|_2 \leq \eta \sqrt{m}$. Let $\hat{\bx}$ be the solution
of the convex optimization problem
\begin{equation*}
\min_{\bx \in \cH} \|\bx\|_{2,1} \quad \mbox{ subject to }\quad \|\AI \bx - \by\|_2 \leq \eta \sqrt{m}.
\end{equation*}
Then with probability at least $1-\vare$,
\begin{equation}\label{nonuni:error}
\|\bx -\hat{\bx}\|_2 \leq C_1 \sigma_s(\bx)_1 + C_2 \sqrt{s} \eta,
\end{equation}
where we recall that $\sigma_s(\bx)_1$ is the error of best $s$-term approximation \eqref{best:sterm}.
The constants $C,C_1,C_2 >0$ are universal. Here
$\beta=1$ in the Bernoulli case and $\beta=2$ in the Gaussian
case.
\end{theorem}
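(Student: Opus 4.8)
The plan is to follow the \emph{inexact dual certificate} approach combined with the golfing scheme of Gross, which is the standard route to nonuniform recovery guarantees of this type. Fix $\bx \in \cH$ with support $S = \supp(\bx)$, $|S| \le s$, and write $\cH_S = \{\bz \in \cH : \supp(\bz) \subseteq S\}$, a subspace of $\cH$ of dimension at most $sk$. First I would reduce the recovery statement to two deterministic conditions on the realized matrix $\AI$: a \textbf{local conditioning} condition, namely that $\frac{1}{m}\AI_S^{*}\AI_S$ is close to the identity on $\cH_S$ (equivalently, a small fusion restricted isometry constant of $\frac{1}{\sqrt m}\AI$ on $\cH_S$), and the existence of an \textbf{inexact dual certificate} $\bu = \frac{1}{m}\AI^{*}\bh$ for some $\bh \in \R^{dm}$ whose blocks satisfy $P_j u_j \approx x_j/\|x_j\|_2$ for $j \in S$ (with small $\ell_2$ error) and $\|u_j\|_2 < 1$ for $j \notin S$. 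Granting these two conditions, the stable and robust bound \eqref{nonuni:error} follows from the $\ell_{2,1}$ analogue of the standard inexact-certificate argument in classical compressed sensing.

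The bulk of the work is the probabilistic construction of $\bh$. I would partition the $m$ rows of $A$ into $L \sim \log(Nsk)$ disjoint batches, each carrying its own independent random coefficients, and build $\bh$ additively across batches. Starting from the residual $\bw_0 = (x_j/\|x_j\|_2)_{j \in S}$, each batch contributes a term that, restricted to $S$, contracts the residual by a fixed factor (so after $L$ steps the on-support error is exponentially small), while its off-support blocks remain uniformly small. Independence of the batches is what makes the assembled certificate a legitimate element of the range of $\frac{1}{m}\AI^{*}$ and allows the per-step estimates to be chained. The local conditioning condition is handled separately, either by invoking an RIP bound for $\frac{1}{\sqrt m}\AI$ on the fixed $sk$-dimensional space $\cH_S$ or by a direct concentration estimate for $\frac{1}{m}\AI_S^{*}\AI_S$.

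Where the coherence $\la$ enters — and the crux of the whole argument — is the off-support control. For a block $j \notin S$, the relevant random quantity at batch $\ell$ has the schematic form $\frac{1}{m_\ell}\sum_i a_{ij}\, P_j\big(\sum_{j' \in S} a_{ij'} w_{j'}\big)$, whose cross terms couple the products $P_j P_{j'}$ with products $a_{ij} a_{ij'}$ of independent mean-zero entries. The operator-norm bound $\|P_j P_{j'}\|_{2\to 2} \le \la$ is exactly what forces the variance of each cross term to scale with $\la^2$, and summing over the at most $s$ indices $j' \in S$ produces the characteristic $(1+\la s)$ dependence: when $\la = 0$ the subspaces are orthogonal, the cross terms vanish identically, and the sparsity drops out. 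I expect the main obstacle to be the sharp tail estimate for these second-order (chaos-type) expressions. For Bernoulli entries the summands are bounded and a Bernstein-type inequality suffices, costing a single logarithmic factor ($\beta = 1$); for Gaussian entries the unboundedness forces either a truncation or a Hanson--Wright/chaos bound together with a union bound over nets on the unit spheres of the subspaces $W_j$, which costs an additional logarithmic factor and accounts for $\beta = 2$. Tracking the subspace dimension $k$ through these net arguments is precisely what introduces the $k$ inside the logarithm.
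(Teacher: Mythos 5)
The paper itself contains no proof of this theorem: it is recalled from \cite{Ayaz13} (listed as ``in preparation''), and the only indication the paper gives of its proof is the remark in the introduction that it ``proceeds by an adaptation of the golfing scheme invented by D.~Gross.'' Your proposal --- reduction to local conditioning plus an inexact dual certificate assembled by golfing over independent row batches, with the coherence entering through the $P_jP_{j'}$ cross terms and Bernstein-type versus chaos/net tail bounds accounting for $\beta=1$ (Bernoulli) versus $\beta=2$ (Gaussian) --- is precisely that approach, so it matches the strategy the paper attributes to the cited proof.
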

The condition $\|\be\|_2 \leq \eta \sqrt{m}$ is
natural for a vector $\be = (e_j)_{j=1}^m$. For instance, it is implied
by the bound $\|e_j\|_2 \leq \eta$ for all $j \in [m]$.
Theorem~\ref{nonuni_main} implies exact sparse recovery via the equality constrained $\ell_{2,1}$-minimization problem $(L_{2,1})$
when $\bx$ is $s$-sparse and $\eta=0$. The required sufficient number $m$ of samples in
\eqref{nonuni_cond} decreases essentially linearly with $\la$. Furthermore, the exponent $\beta=2$ in the Gaussian case is likely not optimal, but presently there is no better estimate available.

\subsection{Main result}
Our uniform recovery result using subgaussian vector-valued measurements reads as follows.
\begin{theorem}\label{main_uniform}
Let $A \in \R^{m \times N}$ be an $\al$-subgaussian matrix and let $(W_j)_{j=1}^N \subset \R^d$
be $k$-dimensional subspaces with coherence $\lambda \in [0,1]$. Assume that
\begin{align}\label{uniform_cond}
m\geq C \al^4 \max\{(\log^2(s)+\la s)(k + \log(N)),\log(\eps^{-1})\}.
\end{align}
Then with probability at least $1-\vare$, $(L_{2,1})$ recovers \textit{all} $s$-sparse $\bx$ from
$\by=\AI \bx$. Moreover, with probability at least $1-\vare$, every vector $\bx \in \cH$ is approximated by
a minimizer $\hat{\bx}$ of
$$\min_{\bz \in \cH} \|\bz\|_{2,1} \quad \mbox{ subject to }\quad \left\|\frac{1}{\sqrt{m}}\AI\bz - \by\right\|_2 \leq \eta$$
with $\by=\frac{1}{\sqrt{m}}\AI \bx + \be$ and $\|\be\|_2\leq \eta$ in the sense that
$$
\|\bx - \hat{\bx}\|_2 \leq C_1 \frac{\sigma_s(\bx)_1}{\sqrt{s}} +
C_2 \eta \ \ \ \text{and} \ \ \
\|\bx - \hat{\bx}\|_{2,1} \leq C_1 \sigma_s(\bx)_1 +
C_2 \sqrt{s} \eta,
$$
where the constants $C_1, C_2 >0$ are universal.
\end{theorem}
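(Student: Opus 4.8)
The plan is to derive Theorem~\ref{main_uniform} from a restricted isometry estimate. Applying Theorem~\ref{thm:robustScalar} to the scalar matrix $\mathbf{\Phi}=\tfrac{1}{\sqrt m}\AI$, both error bounds (and, in the specialization $\eta=0$, exact recovery of every $s$-sparse signal) follow at once, provided we can show that under \eqref{uniform_cond}, with probability at least $1-\eps$, the restricted isometry constant $\theta_{2s}$ of $\tfrac{1}{\sqrt m}\AI$ on $\cH_{2s}$ satisfies $\theta_{2s}<\sqrt2-1$. Thus the entire problem reduces to a high-probability bound on
$$\theta_{2s}=\sup_{\bx\in\cH_{2s},\ \|\bx\|_2=1}\Bigl|\tfrac1m\|\AI\bx\|_2^2-\|\bx\|_2^2\Bigr|.$$

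First I would recast $\theta_{2s}$ as the supremum of a second order chaos process, so that the tail bound of \cite{KMR13} applies. Collecting the independent entries $a_{ij}$ of $A$ into one $\al$-subgaussian vector $\xi\in\R^{mN}$, the map $\bx\mapsto\AI\bx$ is linear in $\xi$, whence $\AI\bx=V_\bx\xi$ with $V_\bx$ block diagonal consisting of $m$ identical copies of the $d\times N$ matrix $M_\bx=[x_1\,|\,\cdots\,|\,x_N]$. Since the entries are mean zero and variance one, $\E\,\tfrac1m\|V_\bx\xi\|_2^2=\tfrac1m\|V_\bx\|_F^2=\|\bx\|_2^2$, so that
$$\theta_{2s}=\sup_{A\in\cA}\bigl|\,\|A\xi\|_2^2-\E\|A\xi\|_2^2\,\bigr|,\qquad \cA=\Bigl\{\tfrac{1}{\sqrt m}V_\bx:\bx\in\cH_{2s},\ \|\bx\|_2=1\Bigr\}.$$
The bound of \cite{KMR13} controls this supremum in terms of the Frobenius radius $d_F(\cA)$, the operator-norm radius $d_{2\to2}(\cA)$, and the Talagrand functional $\gamma_2(\cA,\|\cdot\|_{2\to2})$.

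Estimating these three parameters is the core of the argument. The Frobenius radius is exact: $d_F(\cA)=\sup_\bx\|M_\bx\|_F=\sup_\bx\|\bx\|_2=1$. The operator-norm radius is of order $m^{-1/2}$ and is only improved by incoherence: writing the Gram matrix $M_\bx^{\top}M_\bx$ with diagonal entries $\|x_j\|_2^2$ and off-diagonal entries $\langle x_i,x_j\rangle$ bounded by $\lambda\|x_i\|_2\|x_j\|_2$ gives $\|M_\bx\|_{2\to2}\le\sqrt{1+\lambda}$, so $d_{2\to2}(\cA)\le m^{-1/2}\sqrt{1+\lambda}$. Since $\gamma_2(\cA,\|\cdot\|_{2\to2})=m^{-1/2}\gamma_2(\cM,\|\cdot\|_{2\to2})$ with $\cM=\{M_\bx:\bx\in\cH_{2s},\ \|\bx\|_2=1\}$, inserting these quantities into \cite{KMR13} shows that the dominant term of the chaos bound is of order $\gamma_2(\cM,\|\cdot\|_{2\to2})^2/m$, while the subexponential variance and tail terms are governed by $d_{2\to2}(\cA)$ and produce the $\log(\eps^{-1})$ branch in \eqref{uniform_cond}. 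It therefore suffices to establish
$$\gamma_2(\cM,\|\cdot\|_{2\to2})^2\ \lesssim\ (\log^2 s+\la s)\,(k+\log N).$$

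The hard part is precisely this $\gamma_2$ estimate, and it is where the coherence must genuinely be exploited. The naive route --- using $\|M_\bx-M_{\bx'}\|_{2\to2}\le\|\bx-\bx'\|_2$ and running Dudley's integral against Frobenius-norm covering numbers of the $4s$-sparse, $4sk$-dimensional increment set --- only reproduces the coherence-blind bound $s\,(k+\log(N/s))$ and loses all dependence on $\la$. Instead I would estimate the operator-norm covering numbers of $\cM$ directly and split the chaining into two scales: at fine scales a volumetric/dimension count over the $\binom{N}{2s}$ supports and the $k$-dimensional blocks supplies the $k+\log N$ factor, while at coarse scales a Maurey-type empirical-method argument exploits the smallness of $\|M_\bx\|_{2\to2}$ (of order $\sqrt{1+\la s}$ rather than $\|M_\bx\|_F$) to produce the $\log^2 s+\la s$ factor, the extra $\log^2 s$ being the usual logarithmic loss of such arguments. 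Combining the two covering estimates through Dudley's inequality yields the displayed bound on $\gamma_2(\cM,\|\cdot\|_{2\to2})$. With this in hand, the measurement count \eqref{uniform_cond} forces the dominant chaos term, together with the tail contributions, below $\sqrt2-1$ with probability at least $1-\eps$, giving $\theta_{2s}<\sqrt2-1$ and completing the reduction.
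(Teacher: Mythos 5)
Your overall architecture matches the paper's: reduce to an RIP bound for $\tfrac{1}{\sqrt m}\AI$ via Theorem~\ref{thm:robustScalar}, rewrite $\theta_{2s}$ as the supremum of a second order chaos $\sup_{A\in\cA}|\|A\xi\|_2^2-\E\|A\xi\|_2^2|$ with $V_\bx$ block diagonal (this is exactly the paper's $V_\bx$), compute $d_F(\cA)=1$ and $d_{2\to2}(\cA)\leq\sqrt{(1+\la)/m}$, and reduce everything to the estimate $\ga_2^2(\cM,\|\cdot\|_{2\to 2})\lesssim(\log^2 s+\la s)(k+\log N)$. Up to that point the proposal is correct and identical in substance to the paper. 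The gap is in the one step you yourself identify as the hard part: the coarse-scale covering argument. Your Maurey-type route does not produce the coherence dependence. The natural convex decomposition here is $D_{s,N}\subset\sqrt{s}\,\mathrm{conv}\{\pm M_w : w\in W_j,\ \|w\|_2=1,\ j\in[N]\}$, and Maurey's empirical method bounds the covering error by the type-2 constant of the norm times the norm radius of the \emph{generators}; each generator $M_w$ is a single-column matrix with $\|M_w\|_{2\to 2}=1$ regardless of $\la$. So the resulting entropy bound is of order $s\,T_2^2(k+\log N)/u^2$ with \emph{no} factor $\la$ (and with $T_2^2\sim\log(\min(d,N))$ from the type-2 constant of the operator norm), which after Dudley integration gives $\ga_2^2\gtrsim s(k+\log N)$ --- precisely the coherence-blind bound you set out to beat. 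In your sketch, $\la$ only enters through the radius $d_{2\to 2}(\cA)$, which controls the subexponential tail terms of Theorem~\ref{thm:chaos}, not the dominant $\ga_2^2$ term. Also, the quantitative claim that $\|M_\bx\|_{2\to2}$ is "of order $\sqrt{1+\la s}$" contradicts your own (correct) earlier computation $\|M_\bx\|_{2\to2}\leq\sqrt{1+\la}$ for unit-norm $\bx$; and in any case the radius of a set does not control its covering numbers at scales below that radius, so "smallness of the operator norm" is not by itself a covering mechanism.

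The missing idea, which is the heart of the paper's proof, is the norm splitting
\begin{equation*}
\|V_\bz\|_{2\to 2}\ \leq\ \|\bz\|_{2,\infty}+\sqrt{\la}\,\|\bz\|_2 ,
\end{equation*}
which factors the entropy integral into two pieces: the expensive $\ell_2$-entropy of $D_{s,N}$ (volumetric, contributing $\sqrt{s\log(eN/s)}+\sqrt{sk}$) appears multiplied by $\sqrt\la$, yielding the $\la s(k+\log(N/s))$ term, while the $\la$-free piece is measured in the much weaker $\|\cdot\|_{2,\infty}$-norm, whose coarse-scale covering numbers are controlled not by Maurey but by the dual Sudakov inequality (Lemma~\ref{lem:dualSudakovNorm}): $\sup_u u\log^{1/2}\cN(D_{s,N},\|\cdot\|_{2,\infty},u)\lesssim\E\|\mathbf{U}\mathbf{g}\|_{2,\infty}\lesssim k^{1/2}+\log^{1/2}N$, independent of $s$. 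Splitting that integral at $\ka=s^{-1/2}$ (dual Sudakov above $\ka$, volumetric below) gives the $(k^{1/2}+\log^{1/2}N)\log s$ contribution, hence the $\log^2 s\,(k+\log N)$ term. Without this decomposition, or some equivalent mechanism that couples the coherence to the dominant entropy, your claimed bound on $\ga_2(\cM,\|\cdot\|_{2\to2})$ is not established, and the measurement count \eqref{uniform_cond} does not follow.
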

\begin{remark}
In Section~\ref{sec:compareNecSuf}, we show that the term $\lambda s$ in
\eqref{uniform_cond} is almost optimal under special conditions on the orientation of the subspaces.
However, we believe that the linear factor $k$ in \eqref{uniform_cond} is suboptimal. It is shown in Remark~\ref{subK} that it is not possible to remove this factor with the methods we use to prove our result. As it is now, this factor can be
ignored in the range $k \lesssim \log(N)$. Also note that in the special case of classical frames, $k = 1$, so that the factor of $k$ in the number of measurements vanishes.\par
The assumption that the dimensions of the subspaces $W_j$ are all equal is not necessary in Theorem~\ref{main_uniform}, and can be relaxed to $\max_{1\leq j\leq N}\mathrm{dim}(W_j)\leq k$ in the general case. 		
\end{remark}
Let us briefly compare Theorem~\ref{main_uniform} to earlier uniform recovery results in the literature. In the conference paper \cite{AyazConf13-2}, the first and third named authors announced a preliminary version of Theorem~\ref{main_uniform}. They followed a similar approach as pursued in the present paper but could only provide the condition
$$
m \gtrsim k \sqrt{\lambda s^2 + s}  \log^4( Nd )
$$
on the sufficient number of measurements for uniform recovery with subgaussian matrices. Theorem~\ref{main_uniform} improves on the dependence in $\lambda$, $k$ and $s$ and on the number of logarithmic factors. As we noted earlier, it was shown in \cite{Boufounos09} that $m \gtrsim s \log(N/s)$ measurements are sufficient for recovery of fusion frame sparse signals using
many random measurement ensembles. This result is obtained by showing that if the underlying measurement matrix $A$ satisfies the
classical RIP, then for an arbitrary collection of subspaces $(W_j)_{j=1}^N$ in $\R^d$, the associated matrix $\AI$ satisfies the FRIP, see Remark~\ref{RIPimp}. The paper \cite{Boufounos09} also analyzes recovery via an adapted version of the coherence of the measurement matrix $A \in \R^{m \times N}$ having $\ell_2$-normalized columns $a_1,\hdots,a_N$, defined as $\mu_f := \max_{j \neq k} |\langle a_j, a_k \rangle| \| P_j P_k\|$. A fusion frame $s$-sparse signal $\bx$ can be recovered exactly via $\ell_{2,1}$-minimization if $s < (1+\mu_f^{-1})/2$ \cite[Theorem 3.5]{Boufounos09}. Clearly, $\mu_f \leq \lambda \mu$, where $\mu$ is the standard version of the coherence, $\mu :=  \max_{j \neq k} |\langle a_j, a_k \rangle|$. For matrices $A$ with near-optimal coherence of the order $\mu \sim c/\sqrt{m}$, we conclude that fusion frame $s$-sparse signals can be recovered from $m \geq C \lambda^2 s^2$ measurements. This bound shows a better scaling in $\lambda$ but a quadratic, and hence not optimal, scaling in $s$.

\section{Proof of Theorem~\ref{main_uniform}}

To prove Theorem~\ref{main_uniform} we follow a by now classical approach \cite{cata06,FoR13} to analyze sparse recovery for various algorithms via the restricted isometry property (RIP). The following version of the RIP, which is appropriate in the context of the vector-valued measurement model (\ref{eqn:vectorMeas}), was introduced in \cite{Boufounos09}.
\begin{definition}
Let $A \in \R^{m \times N}$ and $(W_j)_{j=1}^N$ be a collection of
subspaces in $\R^d$. The fusion restricted isometry constant $\delta_s$ of $A$ is the smallest constant such that
\begin{align}\label{FRIP}
(1-\delta_s) \|\bx\|_2^2 \leq \|\AI \bx\|_2^2 \leq
(1+\delta_s)\|\bx\|_2^2, \qquad \mathrm{for \ all} \ \bx\in \cH_s.
\end{align}
We loosely say that $A$ satisfies the fusion restricted isometry property (FRIP) if it has small fusion restricted isometry constants.
\end{definition}
Notice that $\del_s(A)$ is exactly equal to $\theta_s(\AI)$, which was introduced in Definition~\ref{def:RIPHs}. In this section we prove Theorem~\ref{thm:FRIP} which gives a sufficient condition under which an (appropriately scaled) subgaussian matrix satisfies the FRIP. In combination with Theorem~\ref{thm:robustScalar} (applied with $\mathbf{\Phi}=\frac{1}{\sqrt{m}}\AI$), this implies Theorem~\ref{main_uniform}.\par
We first collect some ingredients for the proof of Theorem~\ref{thm:FRIP}. Given an $m \ti n$ matrix $A$ we let $\|A\|_{2\to 2}$ and $\|A\|_F$ denote its operator and Frobenius norm, respectively. If $\cA$ is a set of $m\ti n$ matrices,
then we define the radii of $\cA$ in these norms by
$$d_{2\to 2}(\cA) = \sup_{A \in \cA} \|A\|_{2\to 2}, \qquad d_{F}(\cA) = \sup_{A \in \cA} \|A\|_{F}.$$
We let $\ga_2(\cA,\|\cdot\|_{2\to 2})$ denote the $\ga_2$-functional
of $\cA$. We do not give a precise definition of this quantity (see \cite{Dir13,KMR13} for more information),
but recall that it can be estimated by an entropy integral
(see e.g.\ \cite[Section 1.2]{Tal05})
\begin{equation}
\label{eqn:gammaFunEstEntInt}
\ga_2(\cA,\|\cdot\|_{2\to 2}) \lesssim \int_0^{d_{2\to 2}(\cA)} \log^{1/2}(\cN(\cA,\|\cdot\|_{2\to 2},u)) \ du.
\end{equation}
To estimate the right hand side, we will use below that for any $c,\ka>0$ (see e.g.\ \cite[Lemma C.9]{FoR13}),
\begin{equation}
\label{eqn:intEstimate}
\int_0^{\ka} \log^{1/2}(c/u) \ du \leq \ka\log^{1/2}\Big(\frac{ec}{\ka}\Big).
\end{equation}
The most important ingredient for our proof is the following tail
bound for suprema of second order chaos processes from \cite{KMR13}.
We state here a sharpened version from \cite[Theorem 6.5]{Dir13}. To this end, we introduce the $\psi_2$-norm of a random variable $\xi$
as
\[
\|\xi\|_{\psi_2} := \inf \{C > 0: \E[\exp(\xi^2/(2C^2))] \leq 2\}.
\]
A random variable $\xi$ is subgaussian if and only if $\|\xi\|_{\psi_2}$ is finite.

\begin{theorem}
\label{thm:chaos}
Let $\cA$ be a set of $m\ti n$ matrices. Suppose that
$\xi_1,\ldots,\xi_n$ are independent, real-valued, mean-zero random
variables, let $\xi=(\xi_1,\ldots,\xi_n)$ and set
$\|\xi\|_{\psi_2} = \max_i\|\xi_i\|_{\psi_2}$.
Then there are constants $c,C>0$ such that for any $u\geq 1$,
\begin{align}
\label{eqn:supChaosTail}
\bP\Big(\sup_{A\in\cA}\Big|\|A\xi\|_2^2 - \E\|A\xi\|_2^2\Big| &
\geq C\|\xi\|_{\psi_2}^2\Big(\ga_2^2(\cA,\|\cdot\|_{2\to 2}) + d_F(\cA)\ga_2(\cA,\|\cdot\|_{2\to 2})\Big) \nonumber \\
& \ \ \ \ \ + c\|\xi\|_{\psi_2}^2\Big(\sqrt{u}d_{F}(\cA)d_{2\to 2}(\cA) + ud_{2\to 2}^2(\cA)\Big)\Big) \leq e^{-u}.
\end{align}
\end{theorem}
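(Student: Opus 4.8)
The plan is to view $A \mapsto \|A\xi\|_2^2$ as a second-order chaos process indexed by $\cA$ and to bound its supremum by generic chaining. Writing $\|A\xi\|_2^2 = \xi^\top A^\top A\,\xi$, each increment is a centered quadratic form
\[
\big(\|A\xi\|_2^2 - \E\|A\xi\|_2^2\big) - \big(\|B\xi\|_2^2 - \E\|B\xi\|_2^2\big) = \xi^\top M\,\xi - \E[\xi^\top M\,\xi], \qquad M := A^\top A - B^\top B,
\]
in the independent, mean-zero, subgaussian coordinates $\xi_i$. The basic tool for the increments is the Hanson--Wright inequality, which produces the mixed tail
\[
\bP\big(\,|\xi^\top M\,\xi - \E[\xi^\top M\,\xi]| \geq t\,\big) \leq 2\exp\Big(-c\min\Big\{\tfrac{t^2}{\|\xi\|_{\psi_2}^4\|M\|_F^2},\ \tfrac{t}{\|\xi\|_{\psi_2}^2\|M\|_{2\to 2}}\Big\}\Big),
\]
i.e.\ the process has subgaussian increments at scale $\|\xi\|_{\psi_2}^2\|M\|_F$ and subexponential increments at scale $\|\xi\|_{\psi_2}^2\|M\|_{2\to 2}$. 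For subgaussian (rather than Gaussian) $\xi$ I would first peel off the diagonal term $\sum_i(A^\top A)_{ii}(\xi_i^2 - \E\xi_i^2)$, a sum of independent subexponential variables controlled by Bernstein's inequality, and decouple the off-diagonal chaos against an independent copy $\xi'$, so that conditionally the form is linear in $\xi$ and subgaussian.

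To pass from these increment scales to the geometry of $\cA$, I would factor $M = A^\top(A-B) + (A-B)^\top B$ and use $\|XY\|_F \leq \|X\|_F\|Y\|_{2\to 2}$ together with submultiplicativity of $\|\cdot\|_{2\to 2}$ to get
\[
\|M\|_F \lesssim d_F(\cA)\,\|A-B\|_{2\to 2}, \qquad \|M\|_{2\to 2} \lesssim d_{2\to 2}(\cA)\,\|A-B\|_{2\to 2}.
\]
Hence both scales are governed by the \emph{single} metric $\|A-B\|_{2\to 2}$, with the global radii $d_F(\cA)$ and $d_{2\to 2}(\cA)$ as prefactors. Plugging this into the generic chaining bound for processes with mixed tail increments (\cite{KMR13,Dir13}) immediately yields the tail terms: the subexponential scale contributes the mixed-tail diameter terms $\|\xi\|_{\psi_2}^2\big(\sqrt{u}\,d_F(\cA)d_{2\to 2}(\cA) + u\,d_{2\to 2}^2(\cA)\big)$, which are the maximal values of the two increment scales over $\cA$, while the subgaussian scale contributes a $\ga_2$-functional. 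The linear dependence of the subgaussian scale on $\|A-B\|_{2\to 2}$ then gives the term $d_F(\cA)\,\ga_2(\cA,\|\cdot\|_{2\to 2})$.

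The step I expect to be the main obstacle is obtaining the sharp leading term $\ga_2^2(\cA,\|\cdot\|_{2\to 2})$. A black-box application of the mixed tail chaining bound would leave, in place of $\ga_2^2$, the weaker quantity $d_{2\to 2}(\cA)\,\ga_1(\cA,\|\cdot\|_{2\to 2})$ arising from chaining against the subexponential scale; since one has $\ga_2(\cA,\|\cdot\|_{2\to 2})^2 \lesssim d_{2\to 2}(\cA)\,\ga_1(\cA,\|\cdot\|_{2\to 2})$ in general, replacing the latter by $\ga_2^2$ is a genuine sharpening and the crux of the argument. To achieve it one must exploit the product structure of $A^\top A$ rather than treating the chaos as a generic process: after decoupling and conditioning on $\xi'$, the conditional process is subgaussian with respect to a \emph{random} metric $\rho_{\xi'}(A,B)=\|(A^\top A - B^\top B)\xi'\|_2$ whose own expected size must again be estimated by chaining in $\|\cdot\|_{2\to 2}$. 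It is the composition of these two operator-norm chainings — one over the outer index and one over the random metric — that multiplies into $\ga_2^2$ and removes the $\ga_1$-term, with the cross term producing $d_F(\cA)\,\ga_2(\cA,\|\cdot\|_{2\to 2})$.

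Carrying out this two-level chaining without losing logarithmic factors, and verifying the increment hypothesis uniformly across scales, is the technical heart of the result; this is precisely what is established in \cite{KMR13} and sharpened in \cite[Theorem 6.5]{Dir13}, which I would ultimately invoke to conclude.
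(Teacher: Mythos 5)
Your proposal is consistent with the paper's treatment: the paper does not actually prove this theorem, but imports it as a known result from \cite{KMR13}, stated in the sharpened form of \cite[Theorem 6.5]{Dir13}, and your argument---after a reasonable sketch of the mechanism inside those works (Hanson--Wright increments, the factorization giving operator-norm-controlled scales, and the key difficulty of obtaining $\gamma_2^2$ rather than a $\gamma_1$-type term)---ultimately invokes exactly the same references. Since both you and the paper treat the result as a black-box citation, the approaches coincide.
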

In the proof of Theorem~\ref{thm:FRIP} we use the following special case of the dual Sudakov inequality (\cite{PTJ86}, see also \cite[Section 3.3]{LeT91}).
\begin{lemma}
\label{lem:dualSudakovNorm}
Let $\|\cdot\|$ be any norm on $\R^n$. If $g$ denotes an $n$-dimensional standard Gaussian vector, then
$$\sup_{u>0} u\log^{1/2}(\cN(B_2^n,\|\cdot\|,u)) \leq \E\|g\|.$$
\end{lemma}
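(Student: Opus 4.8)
This lemma is the \emph{dual Sudakov} (dual Sudakov minoration) estimate, and the plan is to prove it by a volumetric argument in which Euclidean volume is replaced by the standard Gaussian measure. Write $K=\{x\in\R^n:\|x\|\le 1\}$ for the unit ball of the norm, so that $\cN(B_2^n,\|\cdot\|,u)$ counts the translates of $uK$ needed to cover $B_2^n$. First I would pass from covering to packing: choose a maximal $u$-separated subset $\{x_1,\dots,x_N\}$ of $B_2^n$ with respect to $\|\cdot\|$. By maximality this set is automatically a $u$-net, so $N\ge\cN(B_2^n,\|\cdot\|,u)$, and since the centers are more than $u$ apart the translates $x_i+\tfrac{u}{2}K$ are pairwise disjoint.

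The heart of the argument is to estimate $N$ through the Gaussian measure of these disjoint translates after introducing a free scaling parameter $t>0$. Since the sets $tx_i+\tfrac{tu}{2}K$ are again pairwise disjoint, their total Gaussian mass is at most one, giving $\sum_{i=1}^N \Prob(g\in tx_i+\tfrac{tu}{2}K)\le 1$. I would then lower bound each summand using the Gaussian shift inequality $\Prob(g\in v+A)\ge e^{-\|v\|_2^2/2}\,\Prob(g\in A)$, valid for every symmetric set $A$; this follows by pairing the contributions of $y$ and $-y$ in the Gaussian integral over $A$ and using $e^{a}+e^{-a}\ge 2$. Because $\|tx_i\|_2\le t$ for $x_i\in B_2^n$, this yields $\Prob(g\in tx_i+\tfrac{tu}{2}K)\ge e^{-t^2/2}\,\Prob(\|g\|\le \tfrac{tu}{2})$, and therefore $N\le e^{t^2/2}/\Prob(\|g\|\le \tfrac{tu}{2})$.

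It then remains to choose $t$ and to control the small-ball probability from below. Taking $t$ a fixed multiple of $\E\|g\|/u$ makes $\tfrac{tu}{2}$ a fixed multiple of $\E\|g\|$, and Markov's inequality $\Prob(\|g\|>\tfrac{tu}{2})\le \tfrac{2\E\|g\|}{tu}$ then forces $\Prob(\|g\|\le\tfrac{tu}{2})$ above a fixed constant. Consequently $\log N\lesssim (\E\|g\|/u)^2$, and taking the supremum over $u$ (which is governed by the small-$u$ regime, as $\cN=1$ once $u$ exceeds the $\|\cdot\|$-diameter of $B_2^n$) gives $u\log^{1/2}\cN(B_2^n,\|\cdot\|,u)\lesssim \E\|g\|$. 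The main obstacle here is quantitative rather than structural: the crude combination of the shift inequality with Markov's bound produces an absolute constant, whereas the stated form has constant $1$. Extracting the sharp constant requires optimizing the scaling jointly with a sharper small-ball estimate, as carried out in \cite{PTJ86,LeT91}. I would emphasize, however, that the proof of Theorem~\ref{thm:FRIP} invokes this bound only inside the entropy integral \eqref{eqn:gammaFunEstEntInt}, where any absolute constant is absorbed into the final condition \eqref{uniform_cond}, so the precise value of the constant is immaterial for our application.
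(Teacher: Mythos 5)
The paper offers no proof of Lemma~\ref{lem:dualSudakovNorm}: it is quoted directly from \cite{PTJ86} (see also \cite[Section 3.3]{LeT91}), so the only comparison available is with the proof in those references --- and your argument is essentially that proof. Passing to a maximal $u$-separated set, scaling by a free parameter $t>0$, using disjointness of the translates $tx_i+\tfrac{tu}{2}K$ together with the shift inequality $\Prob(g\in v+A)\ge e^{-\|v\|_2^2/2}\,\Prob(g\in A)$ for symmetric $A$, and trading $e^{t^2/2}$ against the small-ball probability is exactly the standard route, and each step you describe is sound.

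Two caveats, both of which you at least partially flag. First, your argument yields $\sup_{u>0}u\log^{1/2}\cN(B_2^n,\|\cdot\|,u)\le C\,\E\|g\|$ for an unspecified absolute constant $C$, not the constant $1$ of the statement; the clean constant requires the finer optimization in the cited sources. You are right that this is immaterial for the paper: the lemma enters the proof of Theorem~\ref{thm:FRIP} only through a $\lesssim$ bound feeding the entropy integral \eqref{eqn:gammaFunEstEntInt}. Second, there is a small unfilled step at the end. From $N\le 2\exp\bigl(c(\E\|g\|/u)^2\bigr)$ you get $\log N\le \log 2+c(\E\|g\|/u)^2$, and absorbing the additive $\log 2$ into $(\E\|g\|/u)^2$ requires knowing that the supremum is attained in the regime $u\lesssim\E\|g\|$, i.e.\ that $\cN(B_2^n,\|\cdot\|,u)=1$ once $u$ exceeds a fixed multiple of $\E\|g\|$. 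Your parenthetical about the $\|\cdot\|$-diameter gestures at this but does not supply the needed comparison; it follows from the one-point estimate $\sup_{x\in B_2^n}\|x\|\le\sqrt{\pi/2}\;\E\|g\|$, obtained by writing $g=\langle g,\hat{x}\rangle\hat{x}+g^{\perp}$ and applying Jensen's inequality to the component $g^{\perp}$ orthogonal to a given $x$. With that line added, your proof is complete (up to the constant).
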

We are now prepared to present our result.
\begin{theorem}
\label{thm:FRIP}
Let $(W_j)_{j=1}^N\subset \R^d$ be $k$-dimensional subspaces with coherence $\la$. Let $A$ be an $m\ti N$ $\al$-subgaussian matrix and let $\del_s$ be the fusion restricted isometry constant of $\frac{1}{\sqrt{m}}\AI$.
Then $\bP(\del_s\geq\del)\leq \eps$ provided that
\begin{equation}
\label{eqn:measurementsFRIP}
m\geq C \al^4\del^{-2}\max\{(\log^2(s)+\la s)(k + \log(N)),\log(\eps^{-1})\}.
\end{equation}
\end{theorem}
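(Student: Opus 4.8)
The plan is to realize $\del_s$ as the supremum of a second order chaos in the entries of $A$ and to apply Theorem~\ref{thm:chaos}. Write $X_{\bx}=[x_1\mid\cdots\mid x_N]\in\R^{d\times N}$ for the matrix whose columns are the blocks of $\bx$, and collect the (independent, mean-zero, variance-one, $\al$-subgaussian) entries of $A$ into $\xi=\mathrm{vec}(A)\in\R^{mN}$. Since the $i$-th block of $\AI\bx$ equals $X_{\bx}a^{(i)}$, where $a^{(i)}$ is the $i$-th row of $A$, one has $\|\AI\bx\|_2^2=\sum_{i=1}^m\|X_{\bx}a^{(i)}\|_2^2=\|(I_m\ot X_{\bx})\xi\|_2^2$. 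Setting
$$\cA=\Big\{\tfrac{1}{\sqrt m}\,(I_m\ot X_{\bx}):\ \bx\in\cH_s,\ \|\bx\|_2=1\Big\}$$
and noting $\E\|\tfrac{1}{\sqrt m}(I_m\ot X_{\bx})\xi\|_2^2=\|X_{\bx}\|_F^2=\|\bx\|_2^2=1$, we obtain $\del_s=\sup_{B\in\cA}\big|\|B\xi\|_2^2-\E\|B\xi\|_2^2\big|$, which is exactly the object estimated in (\ref{eqn:supChaosTail}).

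Next I would compute the three parameters appearing in Theorem~\ref{thm:chaos}. Because $\|I_m\ot X_{\bx}\|_F=\sqrt m\,\|X_{\bx}\|_F$ and $\|I_m\ot X_{\bx}\|_{2\to2}=\|X_{\bx}\|_{2\to2}$, we get $d_F(\cA)=1$ and $d_{2\to2}(\cA)=\tfrac{1}{\sqrt m}\sup_{\bx}\|X_{\bx}\|_{2\to2}$. The coherence enters here for the first time: the Gram matrix $X_{\bx}^\top X_{\bx}$ has diagonal $\|x_j\|_2^2$ and, since $x_i=P_ix_i$, off-diagonal entries bounded by $|\langle x_i,x_j\rangle|\le\la\|x_i\|_2\|x_j\|_2$, so that $\|X_{\bx}\|_{2\to2}^2\le1+\la\le2$ and $d_{2\to2}(\cA)\le\sqrt2/\sqrt m$. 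Substituting $d_F=1$, $d_{2\to2}\lesssim m^{-1/2}$ and $\|\xi\|_{\psi_2}\lesssim\al$ into (\ref{eqn:supChaosTail}) with $u=\log(\eps^{-1})$, the terms $\sqrt u\,d_Fd_{2\to2}+u\,d_{2\to2}^2$ are $\lesssim\del$ as soon as $m\gtrsim\al^4\del^{-2}\log(\eps^{-1})$, which gives the second branch of (\ref{eqn:measurementsFRIP}). It then remains to force $\al^2(\ga_2^2+\ga_2)\lesssim\del$, for which it suffices to prove $\ga_2(\cA,\|\cdot\|_{2\to2})\lesssim\del/\al^2$.

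The core of the argument is this bound on $\ga_2$. Since $\|\tfrac{1}{\sqrt m}(I_m\ot(X_{\bx}-X_{\by}))\|_{2\to2}=\tfrac{1}{\sqrt m}\|X_{\bx}-X_{\by}\|_{2\to2}$, covering $\cA$ at scale $u$ is identical to covering $\cD:=\{X_{\bx}:\bx\in\cH_s,\|\bx\|_2=1\}$ in operator norm at scale $\sqrt m\,u$, so the entropy integral (\ref{eqn:gammaFunEstEntInt}) gives
$$\ga_2(\cA,\|\cdot\|_{2\to2})\lesssim\frac{1}{\sqrt m}\int_0^{\sqrt2}\log^{1/2}\cN(\cD,\|\cdot\|_{2\to2},v)\,dv.$$
I would estimate the integral by splitting it at a threshold $v_0$ chosen in terms of $\la$ and $s$. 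For small scales $v\le v_0$ I use a volumetric bound: $\cD$ is a union of $\binom Ns$ unit spheres of the $sk$-dimensional spaces $\cH_S=\bigoplus_{j\in S}W_j$, so by Lemma~\ref{covering_number2} and $\|\cdot\|_{2\to2}\le\|\cdot\|_F$ one has $\log\cN(\cD,\|\cdot\|_{2\to2},v)\lesssim s\log(eN/s)+sk\log(1+2/v)$, whose square root integrates over $[0,v_0]$ to something of order $v_0\sqrt{s(k+\log N)}\,\log^{1/2}(1/v_0)$ by (\ref{eqn:intEstimate}). For large scales $v\in[v_0,\sqrt2]$ I pass to the convex hull $\cD\subset K:=\{\bx\in\cH:\|\bx\|_{2,1}\le\sqrt s,\ \|\bx\|_2\le1\}$ and bound the operator-norm covering numbers of $K$ by a Gaussian-width estimate of dual Sudakov type (Lemma~\ref{lem:dualSudakovNorm}), in which the relevant width is $\E\|\Gamma\|_{2\to2}$ for a Gaussian matrix $\Gamma=[g_j]_j$ whose columns are independent Gaussians supported on the $W_j$. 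Here the coherence enters decisively through $\E[\Gamma\Gamma^\top]=\sum_jP_j$, whose norm a block-Gershgorin estimate controls by $1+\la s$, the remaining fluctuation of $\sum_jg_jg_j^\top$ being handled by matrix concentration on its at-most-$sk$-dimensional range. The aim of these estimates is to show $\int_0^{\sqrt2}\log^{1/2}\cN(\cD,\|\cdot\|_{2\to2},v)\,dv\lesssim\sqrt{(\log^2 s+\la s)(k+\log N)}$, after which optimizing $v_0$ and substituting back yields $\ga_2(\cA,\|\cdot\|_{2\to2})\lesssim\del/\al^2$ exactly under (\ref{eqn:measurementsFRIP}).

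The large-scale covering estimate is the step I expect to be the main obstacle, and it is where the whole improvement over the non-coherent bound $s\log(N/s)$ lives. Two competing demands must be met at once: one must preserve the cross-subspace cancellation $\|\sum_jP_j\|_{2\to2}\le1+\la s$ when passing from the deterministic Gram structure to the random operator norm, so that the sparsity term appears as $\la s$ rather than $s$; and one must select the active blocks through the $\ell_{2,1}$-geometry of $K$ (or a Maurey-type sampling over the blocks) rather than a crude union bound over the $\binom Ns$ supports, so that $\log N$ enters only in the factor $k+\log N$. Keeping the matrix-concentration logarithmic factors at $\log s$ (the range is $\le sk$-dimensional) rather than $\log d$ is what I expect to produce the $\log^2(s)$ summand. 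Once the target entropy-integral bound is established, the chaos tail bound together with the choice $u=\log(\eps^{-1})$ completes the proof.
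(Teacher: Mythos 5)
Your reduction to a chaos process is exactly the paper's: your $\tfrac{1}{\sqrt m}(I_m\ot X_{\bx})$ is, up to a permutation of columns, the paper's matrix $\tfrac{1}{\sqrt m}V_\bx$ built from the blocks $\bQ_{ij}\bx$, and your computations $d_F(\cA)=1$ and $d_{2\to2}(\cA)\le\sqrt{(1+\la)/m}$ via the Gram matrix, as well as the choice $u=\log(\eps^{-1})$ in (\ref{eqn:supChaosTail}), all coincide with the paper's proof of Theorem~\ref{thm:FRIP}. But the entire content of the theorem sits in the $\ga_2$ estimate, and there your sketch has a genuine gap --- precisely at the step you yourself flag as ``the main obstacle''. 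Your large-scale argument relaxes $\cD$ to the convex body $K=\{\|\bx\|_{2,1}\le\sqrt s,\ \|\bx\|_2\le 1\}$ and invokes a dual-Sudakov bound whose width is $\E\|\Gamma\|_{2\to 2}$ with $\E[\Gamma\Gamma^\top]=\sum_{j=1}^N P_j$, claiming $\bigl\|\sum_{j=1}^N P_j\bigr\|_{2\to 2}\le 1+\la s$ by block Gershgorin. That inequality is false: the sum runs over \emph{all} $N$ blocks, so Gershgorin only gives $1+\la(N-1)$, and in fact $\bigl\|\sum_{j=1}^N P_j\bigr\|_{2\to 2}\ge kN/d$ by taking traces --- no trace of $s$ can survive once you pass to $K$, because $K$ has no sparse support structure. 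The bound $1+\la(s-1)$ holds only for sums over $s$ blocks, but reinstating supports costs a union bound over $\binom{N}{s}$ sets, i.e.\ a term $s\log(N/s)$ \emph{without} the factor $\la$, which is exactly what you must avoid. Moreover Lemma~\ref{lem:dualSudakovNorm} covers the Euclidean ball $B_2^n$ in a norm; it does not apply to $K$ directly, and relaxing $K$ to the full Euclidean ball of $\cH$ is fatal for the reason just given. The Maurey-type sampling over blocks that you mention in passing is what would actually be needed there, and it is not carried out.

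The paper sidesteps all of this with one pointwise inequality that your proposal is missing. Splitting the Gram matrix $B_\bz$ of a difference $\bz=\bx-\by$ into its diagonal and off-diagonal parts, and bounding the off-diagonal part in Frobenius norm using $|\langle z_j,z_\ell\rangle|\le\la\|z_j\|_2\|z_\ell\|_2$, gives
\begin{equation*}
\|V_\bz\|_{2\to 2}\ \le\ \|\bz\|_{2,\infty}+\sqrt{\la}\,\|\bz\|_2 ,
\end{equation*}
so the operator-norm entropy of $\{V_\bx:\bx\in D_{s,N}\}$ is dominated by a \emph{product} of covering numbers of $D_{s,N}$ in two norms on $\cH$. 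The $\sqrt{\la}\|\cdot\|_2$ factor is then estimated volumetrically with the union bound over supports (Lemma~\ref{covering_number2}) --- harmless here, because this term carries the prefactor $\sqrt{\la}$ and yields $(\la s)^{1/2}\log^{1/2}(eN/s)+(\la sk)^{1/2}$ --- while the coherence-free $\|\cdot\|_{2,\infty}$ factor is handled by dual Sudakov applied to the max-block norm on $\cH$ (not to an operator norm of matrices), whose Gaussian width is $\E\max_j\|g_j\|_2\lesssim \sqrt k+\sqrt{\log N}$ by Gaussian concentration; splitting that entropy integral at $\ka=s^{-1/2}$ produces the $(\sqrt k+\sqrt{\log N})\log s$ term, i.e.\ your $\log^2(s)$ summand. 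If you want to salvage your scale-splitting route, this norm-splitting is the ingredient to import: it is what lets the coherence multiply the support-counting term while keeping the dual-Sudakov step on a genuine Euclidean ball.
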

\begin{proof}
From the definition of the fusion restricted isometry constant we have
$$
\delta_s = \sup_{\bx \in D_{s,N}} \left|\frac{1}{m}\|\AI \bx \|_2^2 - \|\bx\|_2^2 \right| = \frac{1}{m} \sup_{\bx
\in D_{s,N}} \left|  \|\AI \bx \|_2^2 - \E\|\AI \bx\|_2^2 \right|,
$$
where
$$
D_{s,N} = \{\bx=(x_j)_{j=1}^N \in \cH \ : \ \|\bx\|_0\leq s, \ \|\bx\|_2\leq 1\}.
$$
The relation $ \E\|\AI \bx\|_2^2 =
m \|\bx\|_2^2$ follows easily by noticing that the entries $\xi_{ij}$ of $A$ are independent random variables and all have mean zero and variance 1. Now observe that
\begin{align*}
\AI\bx = \sum_{i \in [m], j \in [N]} \xi_{ij}
\bQ_{ij}\bx,
\end{align*}
where $\bQ_{ij} := \bE_{ij}(I_d)$
denotes the $m\ti N$ block matrix with block $(i,j)$ equal to $I_d$ and all other blocks zero.
We define the matrix $V_\bx \in \R^{md \times mN}$ whose columns
are $ \bQ_{ij}\bx$ for $i \in [m], j \in [N]$, i.e.,
$$
V_\bx = (\bQ_{11}\bx | \bQ_{12}\bx | \ldots |
\bQ_{mN}\bx) \in \R^{md \times mN}.
$$
Then $\AI \bx= V_\bx \boldsymbol\xi$ where $\boldsymbol\xi=(\xi_{ij})$ is a
subgaussian vector of length $m N$ and we can write
$$
\delta_s = \frac{1}{m} \sup_{\bx \in D_{s,N}} \left|\|V_\bx
\boldsymbol\xi\|_2^2 - \E \|V_\bx \boldsymbol\xi\|_2^2 \right|.
$$
We will deduce the result by applying (\ref{eqn:supChaosTail}) to the supremum on the right hand side.
For this purpose we first calculate $\ga_2(\{V_\bx \ : \ \bx\in D_{s,N}\},\|\cdot\|_{2\to 2})$ and the radii of $\{V_\bx \ : \ \bx \in D_{s,N}\}$ in the operator and Frobenius norms.
Consider $\bx,\by\in D_{s,N}$ and set $\bz=\bx-\by$, so that $V_\bz = V_\bx-V_\by$. Observe that $V_\bz^* V_\bz$ is a block diagonal matrix with the same $N\ti N$ block
$(\langle z_{j},z_{\ell}\rangle)_{j \in [N],\ell\in [N]}$ repeating $m$ times. This first of all shows that
$$
\|V_\bz\|_F^2 = \Tr(V_\bz^*V_\bz) = m\sum_{j=1}^N \|z_j\|_2^2
$$
and in particular the radius in the Frobenius norm satisfies $d_F(\{V_\bx \ : \ \bx\in D_{s,N}\}) = \sqrt{m}$. Moreover,
$$
\|V_\bz\|_{2\to 2}^2 = \|V_\bz^*V_\bz\|_{2\to 2} = \|(\langle z_{j},z_{\ell}\rangle)_{j\in [N],\ell\in [N]}\|_{2\to 2}=:\|B_\bz\|_{2\to 2}.
$$
We split $B_\bz$ in two parts. Let $D_\bz$ be the $N\ti N$ diagonal matrix with $\|z_j\|_2^2$ as diagonal elements and set $C_\bz=B_\bz-D_\bz$. By the triangle inequality, we obtain
\begin{align*}
\|(\langle z_{j},z_{\ell}\rangle)_{j \in [N],\ell\in [N]}\|_{2\to 2} & \leq \|D_\bz\|_{2\to 2} + \|C_\bz\|_{2\to 2} \leq \|D_\bz\|_{2\to 2} + \|C_\bz\|_F \\
& = \max_{j\in [N]} \|z_j\|_2^2 + \Big(\sum_{j\neq \ell}\langle z_j,z_{\ell}\rangle^2\Big)^{1/2} \\
& \leq \max_{j\in [N]} \|z_j\|_2^2 + \Big(\sum_{j\neq \ell}\la^2 \|z_j\|_2^2\|z_{\ell}\|_2^2\Big)^{1/2} \\
& \leq \max_{j\in [N]} \|z_j\|_2^2 + \la\sum_{j=1}^N \|z_j\|_2^2.
\end{align*}
This shows that the radius in the operator norm satisfies $d_{2\to 2}(\{V_\bx : \bx\in D_{s,N}\}) \leq \sqrt{1+\la}$.
Moreover, if we set $\|\bz\|_{2,\infty} = \max_{j\in [N]} \|z_j\|_2$ then
$$\|V_\bz\|_{2\to 2} \leq \|\bz\|_{2,\infty} + \sqrt{\la} \|\bz\|_2.$$
Combining this with (\ref{eqn:gammaFunEstEntInt}),
\begin{align}
\label{eqn:EIsumNorm}
\ga_2(\{V_\bx \ : \ \bx\in D_{s,N}\},\|\cdot\|_{2\to 2}) & \lesssim \int_0^{\infty} \log^{1/2}(\cN(\{V_\bx \ : \ \bx\in D_{s,N}\},\|\cdot\|_{2\to 2},u)) \ du \nonumber\\
& \leq \int_0^{\infty}\log^{1/2}(\cN(D_{s,N},\|\cdot\|_{2,\infty} + \sqrt{\la}\|\cdot\|_2, u)) \ du.
\end{align}
Let $\Del = \{(\bx,\bx) \ : \ \bx\in D_{s,N}\}\subset D_{s,N}\times D_{s,N}$ and define
$$\|(\bx,\by)\|_* = \|\bx\|_{2,\infty} + \la \|\by\|_2 \qquad ((\bx,\by) \in D_{s,N}\ti D_{s,N}).$$
Then,
$$\cN(D_{s,N},\|\cdot\|_{2,\infty} + \sqrt{\la}\|\cdot\|_2,u) = \cN(\Del,\|\cdot\|_*,u) \leq \cN(D_{s,N}\ti D_{s,N},\|\cdot\|_*,u).$$
Now, if $N_1$ is a $u$-net for $D_{s,N}$ in the $\|\cdot\|_{2,\infty}$-norm and $N_2$ is a $u$-net for $D_{s,N}$ in the $\sqrt{\la}\|\cdot\|_2$-norm,
then $N_1\ti N_2$ is a $2u$-net for $D_{s,N}\ti D_{s,N}$ in the $\|\cdot\|_*$-norm. Therefore,
$$\cN(D_{s,N}\ti D_{s,N},\|\cdot\|_*,u) \leq \cN(D_{s,N},\|\cdot\|_{2,\infty},u/2)\cN(D_{s,N},\sqrt{\la}\|\cdot\|_{2},u/2).$$
Combining these observations with (\ref{eqn:EIsumNorm}), we obtain
\begin{align}
\label{eqn:gamma2Split}
& \ga_2(\{V_\bx \ : \ \bx\in D_{s,N}\},\|\cdot\|_{2\to 2}) \nonumber \\
& \qquad \lesssim \int_0^{\infty}\log^{1/2}\Big(\cN(D_{s,N},\|\cdot\|_{2,\infty},u/2)\cN(D_{s,N},\sqrt{\la}\|\cdot\|_{2},u/2)\Big) \ du \nonumber \\
& \qquad \leq 2\int_0^{1}\log^{1/2}(\cN(D_{s,N},\|\cdot\|_{2,\infty},u)) \ du \nonumber \\
& \qquad \qquad \qquad \qquad + 2\sqrt{\la}\int_0^{1}\log^{1/2}(\cN(D_{s,N},\|\cdot\|_{2},u)) \ du,
\end{align}
where in the final estimate we used that the $\|\cdot\|_{2,\infty}$ and $\|\cdot\|_2$-diameters of $D_{s,N}$ are equal to $1$. The second integral can be estimated via the volumetric estimate in Lemma~\ref{covering_number2}. It implies that
\begin{align}
\label{eqn:covEstElem}
\cN(D_{s,N},\|\cdot\|_2,u) & \leq \sum_{S\subset [N],|S|=s} \cN(\{\bx \in D_{s,N} \ : \ \mathrm{supp}(\bx)=S\},\|\cdot\|_2,u) \nonumber\\
& \leq \Big(\frac{eN}{s}\Big)^s\Big(1+\frac{2}{u}\Big)^{sk},
\end{align}
where we used that the dimension of $\{\bx \in D_{s,N} \ : \ \mathrm{supp}(\bx)=S\}$ is $sk$, and find using (\ref{eqn:intEstimate})
$$
\int_0^{1}\log^{1/2}(\cN(D_{s,N},\|\cdot\|_{2},u)) \ du
\lesssim s^{1/2}\log^{1/2}(eN/s) + (sk)^{1/2}.
$$
We now compute the first entropy integral on the right-hand side of (\ref{eqn:gamma2Split}).
Since we are now considering the $\|\cdot\|_{2,\infty}$ instead of the $\|\cdot\|_2$-norm we expect that $s$
does not play a role in this part. We use the dual Sudakov inequality to verify this.
For every $1\leq j\leq N$ we let $U_j:\R^k\rightarrow W_j\subset \R^d$ be the matrix with columns consisting
of an orthonormal basis of $W_j$. Define $\mathbf{U}:\R^{kN}\rightarrow \cH$ by $\mathbf{U}=(U_1,\ldots,U_N)$, then $\mathbf{U}^*\mathbf{U}=I_{kN}$.
We consider the norm on $\R^{kN}$ given by
$$
\|\bx\|_{2,\infty,\mathbf{U}} = \|\mathbf{U}\bx\|_{2,\infty}
$$
Let $\mathbf{g}$ be a $kN$-dimensional standard Gaussian vector. By Lemma~\ref{lem:dualSudakovNorm},
\begin{align*}
\sup_{u>0} u\log^{1/2}(\cN(D_{s,N},\|\cdot\|_{2,\infty},u)) & \leq \sup_{u>0} u\log^{1/2}(\cN(B_2^{kN},\|\cdot\|_{2,\infty,\mathbf{U}},u)) \\
& \lesssim \E\|\mathbf{g}\|_{2,\infty,\mathbf{U}} = \E\|\mathbf{U}\mathbf{g}\|_{2,\infty}.
\end{align*}
Since $U_j^*U_j = I_k$ for all $1\leq j\leq N$, we see that
\begin{align*}
\E\|\mathbf{U}\mathbf{g}\|_{2,\infty} & = \E\max_{1\leq j\leq N} \|U_jg_j\|_2 \\
& \leq \max_{1\leq j\leq N} \E\|g_j\|_2 + \E\max_{1\leq j\leq N}\Big|\|g_j\|_2 - \E\|g_j\|_2\Big|,
\end{align*}
where $g_j$ denotes the $j$-th $k$-dimensional block of $\mathbf{g}$. Clearly, for all $1\leq j\leq N$, $\E\|g_j\|_2\leq(\E\|g_j\|_2^2)^{1/2}=k^{1/2}$. Moreover, observe that the function $x\mapsto\|x\|_2$ is $1$-Lipschitz.
Therefore, the concentration inequality for Lipschitz functions of Gaussian vectors (see e.g.\ \cite[Theorem 5.5]{BLM13}) implies that
$$\Big|\|g_j\|_2 - \E\|g_j\|_2\Big|$$
is $1$-subgaussian and therefore
$$\E\max_{1\leq j\leq N}\Big|\|g_j\|_2 - \E\|g_j\|_2\Big| \lesssim \log^{1/2} N.$$
Using these estimates we find for any $0<\ka\leq 1$,
\begin{align*}
\int_{\ka}^1 \log^{1/2}(\cN(D_{s,N},\|\cdot\|_{2,\infty},u)) \ du & \lesssim (k^{1/2}+\log^{1/2} N) \int_{\ka}^1u^{-1} \ du \\
& = (k^{1/2}+\log^{1/2} N)\log(\ka^{-1}).
\end{align*}
By our earlier calculation (\ref{eqn:covEstElem}) and (\ref{eqn:intEstimate}),
\begin{align*}
& \int_0^{\ka} \log^{1/2}(\cN(D_{s,N},\|\cdot\|_{2,\infty},u)) \ du \\
& \qquad \leq \int_0^{\ka} \log^{1/2}(\cN(D_{s,N},\|\cdot\|_2,u)) \ du \\
& \qquad \leq \int_0^{\ka} s^{1/2}\log^{1/2}(eN/s) + (sk)^{1/2}\log^{1/2}\Big(1+\frac{2}{u}\Big) \ du \\
& \qquad \leq \ka s^{1/2}\log^{1/2}(eN/s) + (sk)^{1/2}\ka\log^{1/2}(3e\ka^{-1}).
\end{align*}
Combining these two estimates for $\ka=s^{-1/2}$, we find
\begin{align*}
& \int_0^1 \log^{1/2}(\cN(D_{s,N},\|\cdot\|_{2,\infty},u)) \\
& \qquad \lesssim \tfrac{1}{2}(k^{1/2}+\log^{1/2} N)\log(s) + \log^{1/2}(eN/s) + k^{1/2}\log^{1/2}(3es^{1/2}).
\end{align*}
Collecting our estimates we arrive at
\begin{align*}
& \ga_2(\{V_\bx \ : \ \bx\in D_{s,N}\},\|\cdot\|_{2 \to 2}) \\
& \qquad \lesssim (\la s)^{1/2}\log^{1/2}(eN/s) + (\la sk)^{1/2} + (k^{1/2}+\log^{1/2} N)\log(s) \\
& \qquad \qquad + \log^{1/2}(eN/s) + k^{1/2}\log^{1/2}(s).
\end{align*}
By applying (\ref{eqn:supChaosTail}) with these estimates shows that $\del_s\leq \del$ with probability $1-\eps$ if
$$m\geq C \al^4\del^{-2}\max\{(\log^2(s)+\la s)(k + \log(N)),\log(\eps^{-1})\}.$$
\end{proof}
\begin{remark}\label{subK}
Let us compare the lower bound on the number of required measurements
in (\ref{eqn:measurementsFRIP}) to the requirement for nonuniform recovery stated in Theorem~\ref{nonuni_main}.
Note that $m$ scales in the same way in $\la$, $s$ and $N$ as in
(\ref{nonuni_cond}) (and better in $\eps$). However, the scaling in $k$ is now linear instead of logarithmic.
We conjecture that it is possible to remove the dependence on $k$
altogether. We note here, however, that it is not possible to deduce a positive answer using Theorem~\ref{thm:chaos}.
Indeed, the diagonal projection is a contraction for the operator norm
and therefore, using the notation from the proof of Theorem~\ref{thm:FRIP}, we have for any $1\leq j_*\leq N$,
$$\|z_{j_*}\|_2^2 \leq \max_{j\in [m]} \|z_j\|_2^2 = \|D_\bz\|_{2\to 2} \leq \|B_\bz\|_{2\to 2} = \|V_\bz\|_{2\to 2}^2.$$
Since the unit ball of $W_{j_*}$ is embedded in $D_{s,N}$, we conclude by the majorizing measures theorem \cite{Tal05}
and by identifying $W_{j_*}$ with $\R^k$ that
\begin{align*}
\ga_2(\{V_\bx \ : \ \bx\in D_{s,N}\},\|\cdot\|_{2\to 2}) & \geq \ga_2(B_2^k,\|\cdot\|_2) \gtrsim \E\sup_{x\in B_2^k}\langle x,g\rangle = \E\|g\|_2\gtrsim \sqrt{k}.
\end{align*}
In other words, the linear scaling of $m$ in $k$ in
(\ref{eqn:measurementsFRIP}) is an inevitable consequence of the application of Theorem~\ref{thm:chaos} (although the just deduced lower bound
does not exclude that a possible refinement of our method leads to a bound where $k$ only appears as an additive term). Note that if the dimensions of the subspaces $W_j$ are not equal, then this argument shows that
$$\ga_2(\{V_\bx \ : \ \bx\in D_{s,N}\},\|\cdot\|_{2\to 2})\gtrsim \max_{1\leq j\leq N} \mathrm{dim}(W_j).$$
\end{remark}
\begin{remark}\label{RIPimp}
One can show that the fusion restricted isometry constant $\del_s$ is always bounded by the classical restricted isometry constant $\tilde{\del}_s$ \cite[Proposition 4.3]{Boufounos09}. In particular, in the context of Theorem~\ref{thm:FRIP} this implies that $\bP(\del_s\geq\del)\leq \eps$ if
\begin{equation*}
m\geq C \al^4\del^{-2}\max\{s\log(N/s),\log(\eps^{-1})\},
\end{equation*}
since under this condition $\bP\left(\tilde{\del}_s(\frac{1}{\sqrt{m}}A)\geq \del \right)\leq\eps$ (see e.g.\ \cite[Theorem 9.2]{FoR13}).
This condition is better than the result in Theorem~\ref{thm:FRIP} if the coherence $\la$ is close to $1$.
\end{remark}

\section{A lower bound on the coherence parameter}\label{packing}

In the most favorable scenario, where the coherence $\la$ is (close to) zero, Theorem~\ref{thm:FRIP}
implies that with high probability one can recover any $s$-sparse signal in a stable and robust manner
using ($L_{2,1}$) and a number of measurements which scales only \emph{logarithmically} in the sparsity $s$ and the number of subspaces $N$. This scenario occurs, for example, if $W_1,\ldots,W_N$ are (nearly)
orthogonal lines in $\R^d$ (and $N\leq d$). However, if $N>d/k$ then one can no longer pick $N$ orthogonal
subspaces $k$-dimensional subspaces in $\R^d$, so that $\la=0$ is impossible. In this section we investigate
what the best result is that can be extracted from Theorem~\ref{thm:FRIP} by establishing a lower bound on
$\la$ in terms of the dimensional parameters $k,d$ and $N$.\par
To derive a quantitative lower bound on $\la$ we relate it to optimal packings of Grassmannian manifolds. Let $\G(k,\R^d)$ denote the real Grassmannian
manifold, the collection of all $k$-dimensional subspaces of $\R^d$. We consider the following metric on $\G(k,\R^d)$. Given $V,W \in \G(k,\R^d)$ and their principle angles
$\theta^{(1)}\leq\theta^{(2)}\leq\ldots\leq\theta^{(k)}$, the \emph{spectral distance} between $V$ and $W$ is given by
$$
d_s (V,W) := \min_{\ell} \sin \theta^{(\ell)} = \sin \theta^{(1)}.
$$
The metric $d_s$ is directly related to the coherence parameter $\la$. Indeed, by (\ref{lambda_alter}),
\begin{align}\label{angleREL}
\lambda^2 = \max_{i \neq j} \left( \cos \theta_{ij}^{(1)}
\right)^2 = \max_{i \neq j} \left( 1 - \left( \sin \theta_{ij}^{(1)}
\right)^2 \right) = 1 - \min_{i \neq j} d_s^2(W_i, W_j).
\end{align}
Thus, finding a sharp lower bound for $\la$ is equivalent to finding a set $\cX=(W_j)_{j=1}^N$ in $\G(k,\R^d)$ with maximal packing diameter
$$
\pack_{d_s}(\mathcal{X}) := \min_{i \neq j} d_s(W_i,W_j).
$$
The following upper bound for the packing diameter appears in \cite[Corollary 4.2]{Dhillon08}. Theorem~\ref{thm:specBound} is a direct consequence of a packing diameter bound for the chordal distance \cite[Corollary 5.2]{Conway96}.
Recall that a subspace packing is called \textit{equi-isoclinic} if all the principal angles between all pairs of subspaces are identical \cite{Lemmens73}.
\begin{theorem}
\label{thm:specBound}
If $\cX$ is a set of $N$ subspaces in $\G(k,\R^d)$, then
\begin{align}\label{specbound}
\pack_{d_s}^2(\mathcal{X}) \leq  \frac{(d-k)}{d} \frac{N}{N-1}.
\end{align}
If the bound is attained, then $\cX$ is equi-isoclinic.
\end{theorem}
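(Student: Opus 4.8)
The plan is to reduce the statement about the spectral packing diameter to the known packing bound for the \emph{chordal} distance on $\G(k,\R^d)$, and then to invoke \cite[Corollary 5.2]{Conway96} directly. Recall that for subspaces $V,W\in\G(k,\R^d)$ with principal angles $\theta^{(1)}\leq\cdots\leq\theta^{(k)}$ the squared chordal distance is $d_c^2(V,W)=\sum_{\ell=1}^k\sin^2\theta^{(\ell)}$, and that the Conway--Hardin--Sloane bound asserts
$$
\min_{i\neq j} d_c^2(W_i,W_j)\leq \frac{k(d-k)}{d}\frac{N}{N-1}
$$
for any family $\cX=(W_j)_{j=1}^N$ of $k$-subspaces. (This is proved by mapping each $W_j$ to its orthogonal projection $P_j$, viewing $\{P_j\}$ as points on a sphere in the trace-zero symmetric matrices of squared radius $r^2=k(d-k)/d$, and applying Rankin's simplex bound, using $\|P_i-P_j\|_F^2=2\,d_c^2(W_i,W_j)$.) My task is therefore only to compare $d_s$ with $d_c$ and to track the equality case.

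The key elementary observation is that, since $\theta^{(1)}\leq\theta^{(\ell)}$ for every $\ell$, we have $\sin^2\theta^{(1)}\leq\sin^2\theta^{(\ell)}$, and summing over $\ell$ gives
$$
k\,d_s^2(V,W)=k\sin^2\theta^{(1)}\leq\sum_{\ell=1}^k\sin^2\theta^{(\ell)}=d_c^2(V,W)
$$
for every pair $V,W$. This pointwise domination passes to the minimum over distinct pairs: if $(a,b)$ attains $\min_{i\neq j}d_c^2(W_i,W_j)$, then $k\,\pack_{d_s}^2(\cX)\leq k\,d_s^2(W_a,W_b)\leq d_c^2(W_a,W_b)=\min_{i\neq j}d_c^2(W_i,W_j)$. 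Combining this with the Conway bound and dividing by $k$ yields exactly \eqref{specbound}.

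The part requiring care is the equality case, which I expect to be the only genuine obstacle. Suppose \eqref{specbound} holds with equality, so that both inequalities in the chain above are equalities. Equality in the Conway bound is the equality case of Rankin's simplex bound, which forces the projections $\{P_j\}$ to be equidistant; hence all chordal distances $d_c^2(W_i,W_j)$ share a common value $D=\tfrac{k(d-k)}{d}\tfrac{N}{N-1}$. Combined with $k\,d_s^2(W_i,W_j)\leq d_c^2(W_i,W_j)=D$ for every pair and with $k\,\pack_{d_s}^2(\cX)=D$, this forces $\sin^2\theta_{ij}^{(1)}=D/k$ for \emph{all} $i\neq j$, since each such term is at most $D/k$ while the minimum already equals $D/k$. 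Feeding this back into the equality $k\sin^2\theta_{ij}^{(1)}=\sum_\ell\sin^2\theta_{ij}^{(\ell)}$ and using $\theta_{ij}^{(1)}\leq\theta_{ij}^{(\ell)}$ shows that all principal angles within each pair coincide, and that their common value is the same across all pairs; that is, $\cX$ is equi-isoclinic. The main subtlety is precisely this two-sided pinching: one must correctly combine the Rankin equidistance condition with the pinching of $\sin^2\theta_{ij}^{(1)}$ to the constant $D/k$ to conclude isoclinicity both within and across pairs.
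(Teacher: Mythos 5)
Your proof is correct and takes essentially the same route as the paper: the paper obtains Theorem~\ref{thm:specBound} exactly as a direct consequence of the chordal-distance Rankin bound \cite[Corollary 5.2]{Conway96} (following \cite[Corollary 4.2]{Dhillon08}), which is precisely your reduction $k\,d_s^2(V,W)\leq d_c^2(V,W)$ combined with the equality analysis. Your handling of the equality case (Rankin equidistance forcing all chordal distances equal, then pinching $\sin^2\theta_{ij}^{(1)}$ to the common value $D/k$ to conclude equi-isoclinicity) is the standard argument and is carried out correctly.
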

Optimal packings do not always exist, and for some parameter choices of $N,d,k$ it is not known whether they exist.
It is shown in \cite[Theorem 3.5]{Lemmens73}
that the maximum number of equi-isoclinic $k$-dimensional subspaces in $\R^d$ cannot be greater than
\begin{align}\label{isoclinic}
\frac{1}{2}d(d+1) -\frac{1}{2} k (k+1) +1.
\end{align}
Let us now state a lower bound on the parameter $\la$. By combining \eqref{angleREL} and \eqref{specbound} we obtain
\begin{align}
\label{lambdabound}
\lambda^2 \geq 1- \frac{(d-k)}{d} \frac{N}{N-1}  = \frac{k N-d}{d
N-d}.
\end{align}
When $k=1$, this bound gives a lower bound on the coherence $\mu$ of $N$ $\ell_2$-normalized vectors in $\R^d$, which is known as the \textit{Welch bound}.

\subsection{Comparison with a necessary condition for sparse recovery}
\label{sec:compareNecSuf}

Taking $\mathbf{\Phi}=\AI$ and $m'=md$ in Theorem~\ref{uni_optimal} shows that one needs at least
\begin{equation}
\label{eqn:necVector}
m\gtrsim \frac{s}{d}\log(N/s) + \frac{sk}{d}
\end{equation}
vector-valued measurements to recover every $4s$-sparse vector in $\cH$ exactly using ($L_{2,1}$). Since (\ref{eqn:necVector}) does not involve the coherence parameter $\lambda$, it is not immediately clear how to compare this result with the sufficient conditions for sparse recovery in Theorems~\ref{nonuni_main} and \ref{thm:FRIP}. By (\ref{lambdabound}),
$$\la^2\geq \frac{kN-d}{dN-d}.$$
Thus, in the most favorable case (assuming $N$ is large), $\la\sim\sqrt{k/d}$. In this situation, according to Theorem~\ref{main_uniform}
$$m\gtrsim \Big(\log^2(s)+s\sqrt{\frac{k}{d}}\Big)(k + \log(N))$$
is sufficient for uniform recovery with a Bernoulli matrix. Note that there is a gap between this condition and the necessary condition (\ref{eqn:necVector}). We leave it as an interesting open problem to close this gap.

\end{document}